\documentclass[
    reprint,
    amsmath,amssymb,
    nofootinbib,
    superscriptaddress,
    aps,
    noeprint,
    longbibliography,
    prl
]{revtex4-2}

\usepackage{amsthm}
\usepackage{braket}
\usepackage{graphicx}
\usepackage{placeins}
\usepackage[dvipsnames]{xcolor}
\usepackage{setspace} 
\usepackage{algorithm}
\usepackage{algpseudocode}

\usepackage[english]{babel}
\babelprovide[import=en]{en}
\babelprovide[import=en-GB]{en-GB}

\usepackage{hyperref}
\hypersetup{
    pdfstartview={FitH},
    colorlinks=true,
    linkcolor=blue,
    citecolor=blue,
    filecolor=magenta,
    urlcolor=blue
}

\theoremstyle{plain}
\newtheorem{theorem}{Theorem}[section]
\newtheorem{lemma}[theorem]{Lemma}
\newtheorem{corollary}[theorem]{Corollary}
\newtheorem{definition}[theorem]{Definition}

\DeclareMathOperator*{\E}{\mathbb{E}}
\begin{document}

\title{The practical cost of magic state cultivation
}

\author{Rohan~Mehta}
\email{rohanmehta@g.harvard.edu}
\affiliation{Department~of~Physics,~Harvard~University,~Cambridge,~MA~02138,~USA}
\author{Varun~Menon}
\affiliation{Department~of~Physics,~Harvard~University,~Cambridge,~MA~02138,~USA}
\author{Hengyun~Zhou}
\affiliation{Department of Electrical Engineering and Computer Science, Massachusetts Institute of
Technology, Cambridge, MA 02139, USA}
\author{Mikhail~D.~Lukin}
\affiliation{Department~of~Physics,~Harvard~University,~Cambridge,~MA~02138,~USA}
\author{J. Pablo Bonilla Ataides}
\email{jbonillaataides@g.harvard.edu}
\affiliation{Department~of~Physics,~Harvard~University,~Cambridge,~MA~02138,~USA}

\begin{abstract}
Magic state cultivation is a promising method for preparing high-fidelity non-Clifford resource states—a central component of fault-tolerant quantum computing—with low spacetime overhead. In benchmarking cultivation, existing resource estimates assume access to noiseless transversal measurement of the output magic state to allow for postselection during the final “escape” stage of the cultivation process. However, such measurements would destroy the magic state and are therefore unavailable during computation. We develop an open-boundary postselection method, Caliper, that uses only nondestructive mid-circuit syndrome information and substantially outperforms existing methods operating under these constraints. In some regimes, our postselection method uniquely recovers the resource efficiency of prior estimates. However, in other cases, we observe a tradeoff. At spacetime costs reported in previous works, the achievable logical error rates can be orders of magnitude higher, and recovering the reported logical error rate requires substantially larger codes and longer escape stages. Our results highlight how practical considerations involving mid-circuit postselection can impact the design and performance of magic state cultivation protocols.
\end{abstract}

\maketitle

Quantum error correction (QEC) is believed to be critical for reliable quantum computation at large
scales \cite{shor_scheme_1995, steane_error_1996, knill_resilient_1998,
aharonov_fault-tolerant_2008, kitaev_fault-tolerant_2003, terhal_quantum_2015, fowler_surface_2012}.
Recent experiments have demonstrated QEC on neutral-atom processors
\cite{bluvstein_logical_2024, bluvstein_fault-tolerant_2026, reichardt_fault-tolerant_2025, computing2026quantumerrorcorrectiontoric},
superconducting chips \cite{acharya_suppressing_2023, acharya_quantum_2025, krinner_realizing_2022},
and other platforms \cite{ryan-anderson_realization_2021, postler_demonstration_2022,
paetznick_demonstration_2024, abobeih_fault-tolerant_2022}.
Performing computation on encoded information, however, also requires fault-tolerant non-Clifford
operations, which are necessary for universal quantum computation \cite{gottesman_heisenberg_1998, knill_fault-tolerant_2004,
bravyi_universal_2005, eastin_restrictions_2009, zeng_transversality_2011}. 

Magic state cultivation \cite{chamberland_very_2020, itogawa_efficient_2025, gidney_magic_2024,
rosenfeld_magic_2025, sahay2026fold, vaknin_high_2026, claes_cultivating_2025, chen_efficient_2025,
hirano2026efficient} has recently emerged as a leading approach for preparing magic states with low
logical error rates, which can then be consumed to implement non-Clifford gates
\cite{gottesman_demonstrating_1999, zhou_methodology_2000, knill_fault-tolerant_2004,bravyi_universal_2005,
litinski_magic_2019}. Relatively noisy magic states begin in small codes \cite{li_magic_2015, gidney_cleaner_2023} and are continuously
refined through measurements of Clifford stabilizers and postselection \cite{chamberland_fault-tolerant_2019, chamberland_very_2020, itogawa_efficient_2025}. As the post-selected logical
error rates decrease, the code must be rapidly enlarged to protect the high-fidelity state in the
so-called ``escape'' stage. As this interface is noisy, further postselection is required,
implying repetitions that inflate the expected spacetime volume. The escape stage is the most
expensive component of cultivation \cite{sahay2026fold}, so its postselection overhead is critical
to the viability of the protocol.

Recent work has addressed this challenge by using sliding-scale postselection
\cite{bombin_fault-tolerant_2024, smith_mitigating_2024,
english_thresholds_2025, meister_efficient_2024, lee_efficient_2026, xie_simple_2026,
wills_forced_2026}, achieving a tunable balance between post-selected error rate and acceptance
fraction. One such method, the \emph{complementary gap} \cite{gidney_yoked_2023,
bombin_fault-tolerant_2024}, can achieve exponential error suppression at
high acceptance fractions \cite{bombin_fault-tolerant_2024, smith_mitigating_2024,english_thresholds_2025,
chen2026scalable}. It has therefore been employed across Refs.~\cite{gidney_magic_2024,
sahay2026fold, vaknin_high_2026, claes_cultivating_2025, hirano2026efficient} to facilitate a
low-overhead escape stage. However, its use as a diagnostic requires a closed temporal boundary from a terminal measurement at the end of a circuit to yield a useful result.
Applied directly to a mid-circuit observable, the complementary gap collapses into a narrow band whose width is independent of code distance, obscuring the differences in reliability needed for postselection
(Supplementary Material). Existing simulations close this boundary using information from a
noiseless transversal measurement of the output codeblock. In practice, this measurement would
destroy the magic state, so the decoder must instead decide whether to retain it using only
nondestructive syndrome information.

\begin{figure*}
    \centering
    \includegraphics[width=\textwidth]{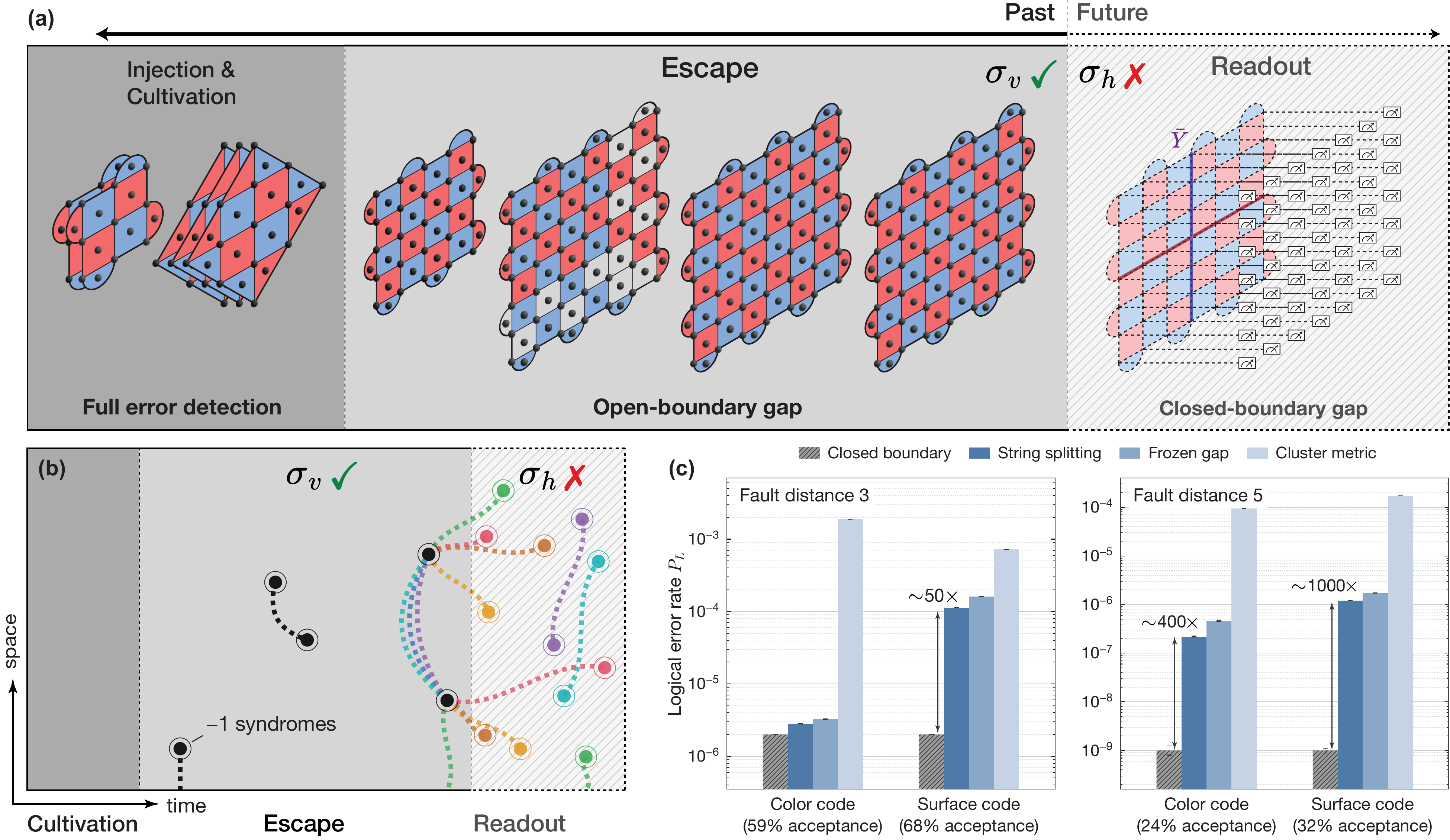}
    \caption{\textbf{Magic state cultivation with mid-circuit postselection.} (a) Stages
    of the fold-transversal cultivation protocol on the surface code from Ref. \cite{sahay2026fold},
    which targets a logical error rate of $2 \times 10^{-6}$. Full error detection is used in the
    injection and cultivation stages (left). The escape stage (center) involves growing to a $d = 7$
    surface code and repeating $r= 3$ rounds of error correction. Readout (right) is depicted by a
    transversal measurement, though it is more accurately described by a noise-free QEC round, which
    provides even more information. As readout cannot be performed in practice, access to $\sigma_h$
    is unphysical; the gap must be predicted from $\sigma_v$ alone. (b) A cross-sectional view of
    open-boundary postselection with one spatial and one temporal dimension. Due to the hidden syndrome layer, the eventual correction, along with what constitutes a logically inequivalent correction for the complementary gap, is hard to predict. (c) Comparison of logical error rates for open- and closed-boundary
    postselection, evaluated at the acceptance fractions reported in Refs. \cite{sahay2026fold,
    gidney_magic_2024} to achieve target logical error rates (LERs) of $2\times10^{-6}$ and
    $10^{-9}$, respectively.}
    \label{fig:overview}
\end{figure*}

This creates two challenges: without the readout, less information is available, and extracting a
reliable postselection metric from the remaining syndrome can be computationally expensive. These
considerations motivate studying escape-stage postselection under realistic, mid-circuit conditions,
where the magic state is preserved rather than measured. We refer to this setting as
\emph{open-boundary postselection}.

In this Letter, we develop an efficient method, \emph{Caliper}, for postselection
using only the syndrome information available at decision time, with latency comparable to computing
the complementary gap. We find that existing efficient heuristics
\cite{lee_efficient_2026, staples_scalable_2026, birchall_macromux_2026} can fall orders of magnitude
short of the performance accessible from the same information. This deficiency can instead be
compensated by expending additional quantum resources, revealing a tradeoff between classical
inference and quantum overhead. Caliper substantially improves this tradeoff. Nevertheless, the
consequences remain strongly protocol dependent: while some cultivation resource estimates are
nearly unchanged, others require substantially larger codes and longer escape stages to approach
their reported logical error rates. The resulting increase in spacetime cost per accepted magic
state directly raises the resources required for non-Clifford gates, and hence for fault-tolerant algorithms whose
cost is dominated by magic state preparation.

\textit{Open-boundary postselection.---} For the cultivation circuits considered here,
minimum-weight perfect matching (MWPM) approximates maximum-likelihood decoding by retaining the
most likely error configuration in each logical class rather than summing over configurations
\cite{dennis_topological_2002, higgott_pymatching_2022, higgott_sparse_2025}. For a complete syndrome $\sigma$, let $w_0$ be the minimum weight of any error configuration consistent with the observed syndrome, and let $w_1$ denote the minimum weight of any complementary error, which differs from the minimum-weight error by a nontrivial logical operator. Both can be computed via matching \cite{hutter_efficient_2014}. The
complementary gap is defined as $G=w_1-w_0$ \cite{gidney_yoked_2023, bombin_fault-tolerant_2024}, which approximates
the conditional logical error probability $P_{\mathrm L}(\sigma)\approx e^{-G}/(1+e^{-G})$. After computing the complementary gap, a decoder makes a postselection decision on whether to accept or discard the shot based on whether this gap $G$ exceeds a tunable threshold $G^*$.

At decision time, however, the decoder only has access to a visible syndrome history $\sigma_v$. The final
readout would reveal an additional, hidden (future) syndrome $\sigma_h$, giving the complete syndrome
$\sigma=\sigma_v\oplus\sigma_h$~\cite{staples_scalable_2026}, but this information is inaccessible in practice. Instead, we average over the potential hidden syndromes, weighting each by how likely it is based on $\sigma_v$: $\tilde P_L(\sigma_v)=\sum_{\hat\sigma_h}P(\hat\sigma_h\mid\sigma_v)P_{\mathrm
L}(\sigma_v\oplus\hat\sigma_h)$.

Replacing each logical class by its minimum-weight error configuration gives
\begin{equation}
    \label{eq:decoding-failure-likelihood}
    \hat P_L(\sigma_v)
    = \frac{\sum_{\hat\sigma_h}e^{-w_1(\sigma_v\oplus\hat\sigma_h)}}
    {\sum_{\hat\sigma_h}\left[e^{-w_0(\sigma_v\oplus\hat\sigma_h)}
    +e^{-w_1(\sigma_v\oplus\hat\sigma_h)}\right]}.
\end{equation}
Evaluating
Eq.~\eqref{eq:decoding-failure-likelihood} requires a search over exponentially many hidden
syndromes, making it challenging to identify, at low computational cost, the
hidden syndromes that contribute most strongly to the estimated logical failure probability.

\begin{figure*}
    \centering
    \includegraphics[width=\textwidth]{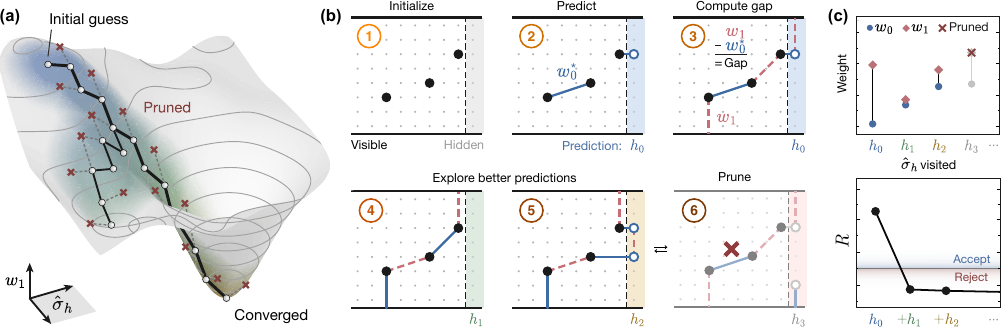}
    \caption{\textbf{Overview of Caliper}. (a) Caliper builds on top of the
    complementary gap, which when applied to the open-boundary setting returns an initial guess for
    the hidden syndrome $\hat{\sigma}_{h_0}$ and the gap. It then traverses the landscape of other
    possible hidden syndromes $\hat \sigma_h$ to find low-weight complementary corrections, $w_1$,
    pruning locally suboptimal directions from the search frontier. (b) A step-by-step walkthrough
    of this process. In steps 1-3, Caliper finds $\hat \sigma_{h_0}$ and its first guess at the gap,
    $w_1 - w_0^*$. It then iteratively explores the syndrome space (steps 4-6), preferentially expanding
    predictions with low $w_1$, until the frontier is exhausted or the expansion limit is reached.
    (c) The weights of the corrections for each prediction (top) contribute to the score
    $R$ (bottom). A predicted hidden syndrome with nearly tied corrections can lower $R$ below the acceptance
    threshold; further predictions continue to adjust the sums.}
    \label{fig:caliper}
\end{figure*}

\textit{Caliper.---} Caliper is a soft-output decoder that
approximates the likelihood of an eventual logical error with low latency. Caliper approximates
Eq.~\eqref{eq:decoding-failure-likelihood} over a set $\mathcal S$ of hidden syndromes selected by
its search, and reports the open-boundary postselection score $R \approx -\log \hat P_L^{(\mathcal S)}$.
Large $R$ indicates a reliable shot, which is accepted when $R\geq R^*$. This plays the role of $G$ for open-boundary postselection, but requires searching for the hidden syndromes that dominate the estimated logical error probability and combining their contributions.

We illustrate the key idea in Fig.~\ref{fig:caliper}. Caliper begins by decoding only the visible
syndrome, with hidden detectors connected to the boundary at zero weight. The decoder finds a minimum-weight
correction whose endpoints in the hidden layer define an initial prediction $\hat\sigma_{h_0}$, with
weight $w_0^*=w_0(\hat\sigma_{h_0})$. Fixing this prediction and decoding the opposing logical
class gives $w_1(\hat\sigma_{h_0})$. These two calls to MWPM determine the complementary gap
$G_{h_0}=w_1(\hat\sigma_{h_0})-w_0^*$ if the hidden syndrome were indeed $\hat{\sigma}_{h_0}$, 
but it does not account for other possible outcomes, whose importance depends both on how likely they are and on the probability of a
logical error if they occur. Equation~\eqref{eq:decoding-failure-likelihood} combines these two
considerations: for a fixed visible syndrome, each predicted hidden syndrome contributes to the estimated logical
error probability in proportion to $e^{-w_1(\hat\sigma_h)}$.

Our key observation is that the structure of minimum-weight matching allows this space to be
explored efficiently. Moving a $-1$ syndrome from $u$ to $v$ changes the optimal correction
weight by at most the shortest-path distance between them, accounting for the logical parity of the
path. The required graph distances
can all be precomputed once for the circuit (Supplementary Material), immediately bounding the weights for nearby syndromes.

Furthermore, flipping one hidden detector requires changing the minimum-weight correction only along a path starting at that detector. Caliper can therefore search for the new correction using precomputed distances, calling MWPM only when needed to resolve the smallest weights. This enables Caliper to explore the neighborhood of a syndrome without decoding every prediction separately.

\begin{figure*}[t!]
    \centering
    \includegraphics[width=\textwidth]{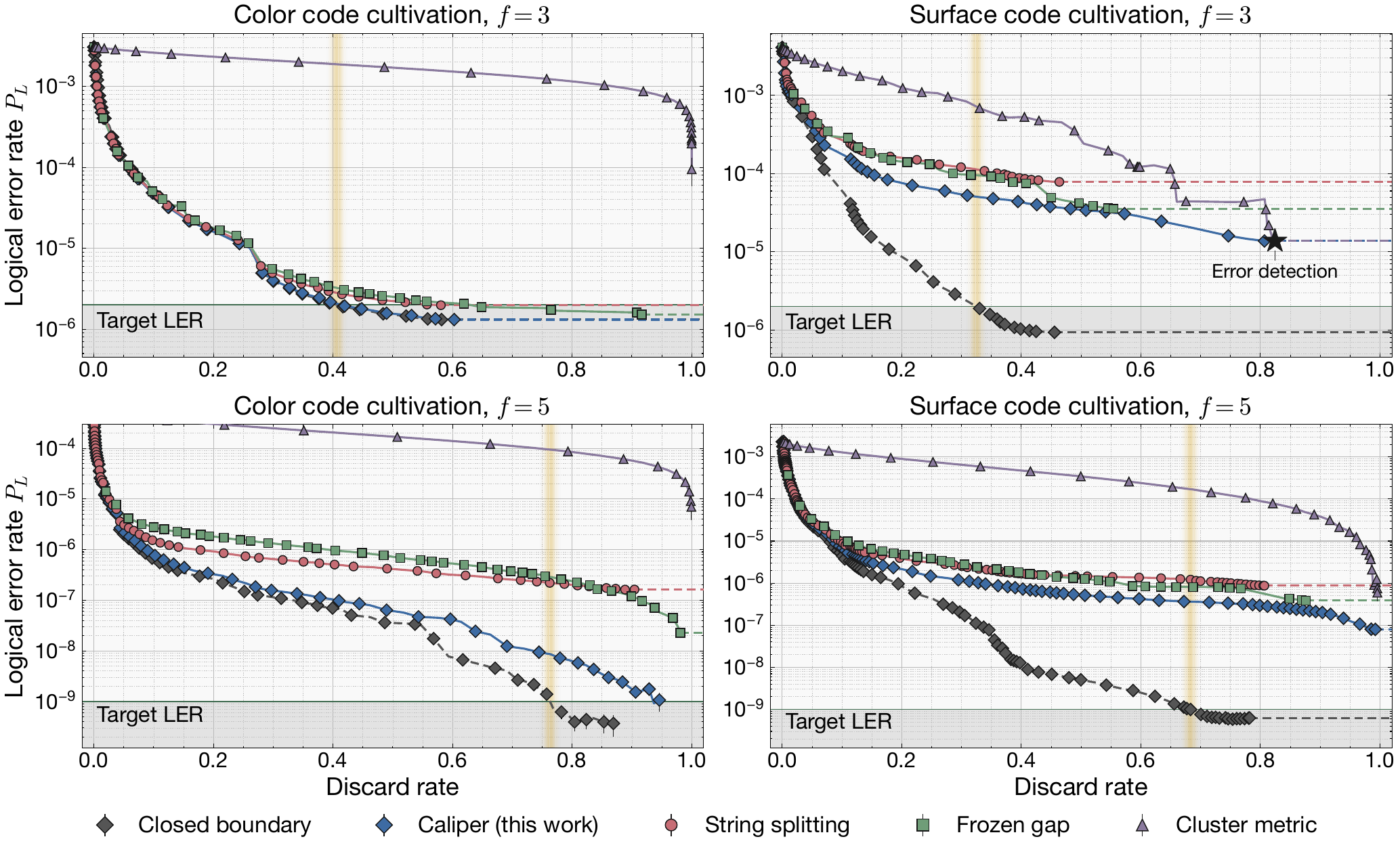}
    \caption{\textbf{Achievable suppression of logical error rate with open-boundary postselection.}
    We compare the logical error rate achieved under closed-boundary postselection at a tunable
    discard rate for the escape stage (gray) with those achieved by various open-boundary postselection protocols,
    including Caliper (blue), string splitting (red) \cite{staples_scalable_2026}, the frozen gap
    (green) \cite{birchall_macromux_2026,note_frozen_gap_parameter}, and cluster-based confidence (purple)
    \cite{lee_efficient_2026,note_cluster_confidence_metric} for the protocols in Refs. \cite{gidney_magic_2024, sahay2026fold, scc_fd5_simulation}. Target LERs of $2 \times 10^{-6}$ for fault distance 3 ($f=3$) cultivation and $10^{-9}$ for
    distance 5 ($f=5$) cultivation are annotated. Our reproduction of closed-boundary postselection matches the data in Refs. \cite{gidney_magic_2024, sahay2026fold} to within error bars, and the discard rate at which these curves reach the target LER is highlighted. Dashed horizontal lines indicate the LER plateaus reached when the postselection metrics
    saturate. At these points, each metric can no longer discriminate among the remaining shots by their likelihood of logical failure. The performance under
    full error detection is shown for $f=3$ surface code cultivation, with a discard rate of
    $\sim82.4\%$; no shots passed full error detection for the other protocols.}
    \label{fig:comparison}
\end{figure*}

Caliper uses this observation to traverse the hidden-syndrome space in Fig.~\ref{fig:caliper}.
Starting from $\hat\sigma_{h_0}$, it evaluates hidden syndromes that differ by one hidden detector value
and preferentially continues from those with low $w_1$. This directs the search toward the syndromes
that contribute most strongly to the estimated logical error probability. Caliper also permits
steps that increase $w_1$ during an initial exploration period, subject to certain constraints.
In the Supplementary Material, we extend the search to exhaust the frontier and increase its
budget, finding only marginal reductions in logical error rate in the regimes of interest at a
substantial increase in evaluation time. 

Despite exploring a large number of hidden
syndromes, for $f=3$ surface code cultivation, Caliper spends only tens of nanoseconds on each prediction, thanks to the precomputed
graph distances. The resulting latency is within a factor of two of computing the complementary
gap in closed-boundary postselection (End Matter).

\textit{Numerical simulations.---} With Stim \cite{gidney_stim_2021}, we simulate $\ket{\bar Y}$ state cultivation---an efficient proxy
for $\ket{\bar T}$ state simulation---using the circuits of Refs.~\cite{gidney_magic_2024,
sahay2026fold, scc_fd5_simulation}, which we respectively refer to as color and (fold-transversal) surface code cultivation. We use the
same circuits and circuit-level depolarizing noise model (SD6), with physical error rate
$p=10^{-3}$, and compare the logical error rate achieved as a function of the escape-stage
postselection strength.

The results are shown in Fig.~\ref{fig:comparison}. Across all panels, Caliper achieves an equal or
lower logical error rate at any fixed discard rate than the other open-boundary methods tested. For the fault distance 3
($f=3$) color code cultivation, it performs nearly indistinguishably from the closed-boundary curve,
which most other methods also approximately follow. By contrast, in $f=3$ surface code cultivation,
the lowest achieved logical error rates are separated from closed-boundary performance by at least
an order of magnitude. Even full error detection on the visible syndromes achieves an error rate of $1.4\times10^{-5}$,
exceeding the target by almost an order of magnitude. Caliper alone smoothly
interpolates down to this performance. Our low-noise bound for full error detection (Supplementary
Material) suggests that the separation may reflect fundamental limits of the available information.

In the larger $f=5$ color code scheme, Caliper outperforms existing methods by orders of magnitude
at the relevant acceptance fractions. For instance, producing $\ket{\bar T}$ states at logical error rates of $10^{-7}$ is a common primitive in proposed fault-tolerant architectures \cite{gidney_magic_2024, zhou_resource_2025, sun2026foldedsurfacecodearchitecture, fazio2025lowoverheadmagicstatecircuits, khalid2026impactsdecoderlatencyutilityscale}.
Caliper reaches this target with an expected spacetime volume per accepted state only $10\%$
above that of closed-boundary postselection. The frozen gap \cite{birchall_macromux_2026},
the only other method we benchmarked to reach this error rate, requires approximately $7\times$
the closed-boundary volume. Reaching the more stringent $10^{-9}$ target with Caliper
requires an escape-stage acceptance fraction of $6.4\%$, compared with $23.7\%$ for
closed-boundary postselection. The expected spacetime volume is therefore approximately
$3.7\times$ larger. Even accepting only a few percent of shots, the $f=5$ surface code protocol \cite{fault_distance_4} under open-boundary postselection exhibits a logical error around $10^{-7}$.

A curious suggestion from Figure~\ref{fig:comparison} is that the performance gap between open- and closed-boundary postselection appears far larger for surface code cultivation than color code cultivation. As it turns out,
these protocols differ considerably in the size of their escape stages. For $f=3$ surface code
cultivation, the code is grown to $d=7$ and $r=3$ rounds of stabilizer measurements are performed
before deciding whether to keep the attempt. The $f=3$ color code escape stage instead uses $r=9$
rounds in a $d=15$ grafted code---roughly a $14\times$ larger syndrome. This motivates studying
modifications to the surface code escape stage.

Indeed, in Fig.~\ref{fig:modified-escape}, we independently vary the escape distance and number of QEC
cycles before running Caliper. Increasing the number of rounds alone leads to only an incremental
reduction in logical error rate, despite a $d=7$ surface code being large enough to maintain a rate
of order $10^{-6}$ per cycle. 
Recovering the target \(2\times10^{-6}\) requires increasing both the distance and the number of rounds, contrary to the usual expectation that \(d\) rounds of repeated syndrome extraction equate to a noiseless transversal measurement. Among the escape stages studied, the expected spacetime volume needed to reach this target is
minimized with $r=6$, $d=11$, and $50\%$ escape-stage acceptance. This results in an escape stage roughly
$5\times$ the size reported in Ref.~\cite{sahay2026fold}, translating to roughly a $5\times$
increase in expected spacetime volume for the entire protocol.

Finally, we study the discrepancy between open- and closed-boundary postselection, tracing it in part to rare shots in which the hidden syndrome substantially \textit{erodes} the closed-boundary gap (End Matter). These shots can be readily discarded using the readout but cannot be identified from visible information alone. We develop a phenomenological theory of this effect and hypothesize that a large escape stage is needed to keep the noisy interface between the cultivation and escape stages far from both the spatial boundaries of the code and the open temporal boundary in order to mitigate erosion.

\begin{figure}
    \centering
    \includegraphics[width=\linewidth]{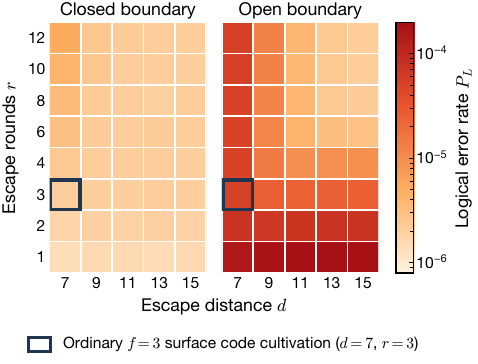}
    \caption{\textbf{Logical error rates under various escape stages.} The escape distance $d$ and
    number of escape rounds $r$ parameterize the volume of the escape stage. We evaluate the
    resulting LERs at $68\%$ escape-stage acceptance fraction, matching the rate reported in Ref.
    \cite{sahay2026fold} to reach $2 \times 10^{-6}$ for their $f=3$ surface code cultivation
    protocol. The outlined cells at $d=7, r=3$ correspond to the escape stage circuit proposed therein.}
    \label{fig:modified-escape}
\end{figure}

\textit{Outlook.---} Our results show that fast and accurate mid-circuit postselection can be
implemented on matchable codes. In some instances, Caliper recovers the logical error rates
reported under closed-boundary postselection at nearly unchanged spacetime cost. However, for
other protocols and target error rates, the cost of cultivation in practice may be significantly
larger than previously expected, raising the resource overheads of architectural proposals based on
its closed-boundary performance \cite{gidney2025factor, zhou_resource_2025, cain_shors_2026, webster2026pinnacle, xue2026factoring,
khan2026architecting, yoder2025tour}. Recent work has also developed exact simulations of non-Clifford $\ket{\bar T}$ cultivation
\cite{li2025soft, chase_clifft_2026, takada2026exactefficientsimulationnoisy,
hartweg_comparing_2026, wan_exact_2026, chan_diagnosing_2026, tuloup_computing_2026}, some finding discrepancies with $\ket{\bar Y}$
proxies. An exact simulation with open-boundary postselection is therefore an important direction
for determining how realistic escape stages affect these discrepancies.

Nevertheless, it may be possible to mitigate the cost of practical postselection by optimizing escape protocols. For example, adaptive escape schemes \cite{mishima2026adaptive} could reserve large escape stages for rare, high-risk shots (Supplementary Material), reducing the average overhead required to achieve high fidelities. Simultaneously, more sophisticated inference, including ensembling \cite{shutty_efficient_2026} and machine
learning \cite{gu_scalable_2026, bonilla_ataides_neural_2026, dentelski2026neural, blue_machine_2026,
baireuther_machine-learning-assisted_2017, baireuther_neural_2019, lacroix_scaling_2025,
bausch_learning_2024, meinerz_scalable_2022, bluvstein_fault-tolerant_2026,lange_data-driven_2025,
cao_generative_2025, liu_decoding_2025, hu2025efficient, zhang2025learning}, could improve on Caliper's performance.
Neural decoders, in particular, can provide postselection metrics based on decoder confidence that
approximate a logical gap and outperform the surface code complementary gap in closed-boundary
experiments \cite{gu_scalable_2026, dentelski2026neural, bausch_learning_2024}. Extending these metrics to open boundaries
poses challenges analogous to those for the complementary gap, and adapting them will require training
to estimate the probability of eventual decoding failure using only the visible syndrome. Caliper
can provide a benchmark for this task.

We expect the broader lessons on realistic postselection to extend to resource-state
preparation with high-rate quantum low-density parity-check codes \cite{menon_magic_2025, jacob_single-shot_2025, li_transversal_2025, hetenyi2026constantdepthmagicstate}. The analogue of an
escape stage for these protocols should likewise be co-designed with mid-circuit postselection.

\let\oldaddcontentsline\addcontentsline
\renewcommand{\addcontentsline}[3]{}
\begin{acknowledgments}
    \textit{Acknowledgements.---} We acknowledge helpful discussions with Andi Gu, Nazli Ugur
    Koyluoglu, Luke Coffman, Nishad Maskara, and Kaavya Sahay. We also thank the authors of Refs.
    \cite{gidney_magic_2024, sahay2026fold} for making their circuits and simulation data open
    source. The authors acknowledge the use of Anthropic's Claude Fable 5 to assist in writing the
    simulation code for this work. All code was reviewed, tested, and validated by the authors. 
    This work was supported by Anthropic under the AI for Science program, the National Science
    Foundation (CCF 2313084), the CUA Physics Frontiers Center (PHY-2317134) and NVQL (PHY-2410716),
    IARPA and the Army Research Office, under the Entangled Logical Qubits program (W911NF
    23-2-0219), the DARPA MeasQuIT program (HR0011-24-9-0359), and QuEra Computing.
\end{acknowledgments}

\textit{Data availability.---} The data supporting this work are available at
\url{https://github.com/rohanmehtagithub/MSC-data}.

\bibliographystyle{apsrev4-2}
\bibliography{references}

@misc{gidney2025factor,
      title={How to factor 2048 bit RSA integers with less than a million noisy qubits}, 
      author={Craig Gidney},
      year={2025},
      eprint={2505.15917},
      archivePrefix={arXiv},
      primaryClass={quant-ph},
      url={https://arxiv.org/abs/2505.15917}, 
}

@misc{birchall_macromux_2026,
	title = {Macromux: scalable postselection for high-threshold fault-tolerant quantum computation},
	shorttitle = {Macromux},
	url = {http://arxiv.org/abs/2603.04875},
	doi = {10.48550/arXiv.2603.04875},
	urldate = {2026-08-02},
	publisher = {arXiv},
	author = {Birchall, Patrick and Bridgeman, Jacob and Dawson, Christopher and Farrelly, Terry and Liu, Yehua and Nickerson, Naomi and Pant, Mihir and Roberts, Sam and Seetharam, Karthik and Tuckett, David},
	month = mar,
	year = {2026},
	note = {arXiv:2603.04875 [quant-ph]},
}

@article{higgott_sparse_2025,
	title = {Sparse {Blossom}: correcting a million errors per core second with minimum-weight matching},
	volume = {9},
	issn = {2521-327X},
	shorttitle = {Sparse {Blossom}},
	url = {http://arxiv.org/abs/2303.15933},
	doi = {10.22331/q-2025-01-20-1600},
	urldate = {2026-08-02},
	journal = {Quantum},
	author = {Higgott, Oscar and Gidney, Craig},
	month = jan,
	year = {2025},
	note = {arXiv:2303.15933 [quant-ph]},
	pages = {1600},
}

@article{dennis_topological_2002,
	title = {Topological quantum memory},
	volume = {43},
	issn = {0022-2488},
	url = {https://doi.org/10.1063/1.1499754},
	doi = {10.1063/1.1499754},
	number = {9},
	urldate = {2026-08-02},
	journal = {Journal of Mathematical Physics},
	author = {Dennis, Eric and Kitaev, Alexei and Landahl, Andrew and Preskill, John},
	month = sep,
	year = {2002},
	pages = {4452--4505},
}

@article{chamberland_very_2020,
	title = {Very low overhead fault-tolerant magic state preparation using redundant ancilla encoding and flag qubits},
	volume = {6},
	issn = {2056-6387},
	url = {http://arxiv.org/abs/2003.03049},
	doi = {10.1038/s41534-020-00319-5},
	number = {1},
	urldate = {2026-08-02},
	journal = {npj Quantum Information},
	author = {Chamberland, Christopher and Noh, Kyungjoo},
	month = oct,
	year = {2020},
	note = {arXiv:2003.03049 [quant-ph]},
	pages = {91},
}

@article{itogawa_efficient_2025,
	title = {Efficient {Magic} {State} {Distillation} by {Zero}-{Level} {Distillation}},
	volume = {6},
	issn = {2691-3399},
	url = {http://arxiv.org/abs/2403.03991},
	doi = {10.1103/thxx-njr6},
	number = {2},
	urldate = {2026-08-02},
	journal = {PRX Quantum},
	author = {Itogawa, Tomohiro and Takada, Yugo and Hirano, Yutaka and Fujii, Keisuke},
	month = jun,
	year = {2025},
	note = {arXiv:2403.03991 [quant-ph]},
	pages = {020356},
}

@misc{gidney_cleaner_2023,
	title = {Cleaner magic states with hook injection},
	url = {http://arxiv.org/abs/2302.12292},
	doi = {10.48550/arXiv.2302.12292},
	urldate = {2026-08-02},
	publisher = {arXiv},
	author = {Gidney, Craig},
	month = feb,
	year = {2023},
	note = {arXiv:2302.12292 [quant-ph]},
}

@article{li_magic_2015,
	title = {A magic state's fidelity can be superior to the operations that created it},
	volume = {17},
	issn = {1367-2630},
	url = {http://arxiv.org/abs/1410.7808},
	doi = {10.1088/1367-2630/17/2/023037},
	number = {2},
	urldate = {2026-08-02},
	journal = {New Journal of Physics},
	author = {Li, Ying},
	month = feb,
	year = {2015},
	note = {arXiv:1410.7808 [quant-ph]},
	pages = {023037},
}

@article{litinski_magic_2019,
	title = {Magic {State} {Distillation}: {Not} as {Costly} as {You} {Think}},
	volume = {3},
	issn = {2521-327X},
	shorttitle = {Magic {State} {Distillation}},
	url = {http://arxiv.org/abs/1905.06903},
	doi = {10.22331/q-2019-12-02-205},
	urldate = {2026-08-02},
	journal = {Quantum},
	author = {Litinski, Daniel},
	month = dec,
	year = {2019},
	note = {arXiv:1905.06903 [quant-ph]},
	pages = {205},
}

@article{zhou_methodology_2000,
	title = {Methodology for quantum logic gate constructions},
	volume = {62},
	issn = {1050-2947, 1094-1622},
	url = {http://arxiv.org/abs/quant-ph/0002039},
	doi = {10.1103/PhysRevA.62.052316},
	number = {5},
	urldate = {2026-08-02},
	journal = {Physical Review A},
	author = {Zhou, Xinlan and Leung, Debbie W. and Chuang, Isaac L.},
	month = oct,
	year = {2000},
	note = {arXiv:quant-ph/0002039},
	pages = {052316},
}

@article{hutter_efficient_2014,
	title = {Efficient {Markov} chain {Monte} {Carlo} algorithm for the surface code},
	volume = {89},
	url = {https://link.aps.org/doi/10.1103/PhysRevA.89.022326},
	doi = {10.1103/PhysRevA.89.022326},
	number = {2},
	urldate = {2026-08-02},
	journal = {Physical Review A},
	publisher = {American Physical Society},
	author = {Hutter, Adrian and Wootton, James R. and Loss, Daniel},
	month = feb,
	year = {2014},
	pages = {022326},
}

@article{bombin_fault-tolerant_2024,
	title = {Fault-{Tolerant} {Postselection} for {Low}-{Overhead} {Magic} {State} {Preparation}},
	volume = {5},
	url = {https://link.aps.org/doi/10.1103/PRXQuantum.5.010302},
	doi = {10.1103/PRXQuantum.5.010302},
	number = {1},
	urldate = {2026-08-02},
	journal = {PRX Quantum},
	publisher = {American Physical Society},
	author = {Bomb{\'i}n, H{\'e}ctor and Pant, Mihir and Roberts, Sam and Seetharam, Karthik I.},
	month = jan,
	year = {2024},
	pages = {010302},
}

@misc{gidney_yoked_2023,
	title = {Yoked surface codes},
	url = {http://arxiv.org/abs/2312.04522},
	doi = {10.48550/arXiv.2312.04522},
	urldate = {2026-08-02},
	publisher = {arXiv},
	author = {Gidney, Craig and Newman, Michael and Brooks, Peter and Jones, Cody},
	month = dec,
	year = {2023},
	note = {arXiv:2312.04522 [quant-ph]},
}

@article{english_thresholds_2025,
	title = {Thresholds for {Postselected} {Quantum} {Error} {Correction} from {Statistical} {Mechanics}},
	volume = {135},
	url = {https://link.aps.org/doi/10.1103/nh49-52y2},
	doi = {10.1103/nh49-52y2},
	number = {12},
	urldate = {2026-08-02},
	journal = {Physical Review Letters},
	publisher = {American Physical Society},
	author = {English, Lucas H. and Williamson, Dominic J. and Bartlett, Stephen D.},
	month = sep,
	year = {2025},
	pages = {120603},
}

@misc{meister_efficient_2024,
	title = {Efficient soft-output decoders for the surface code},
	url = {http://arxiv.org/abs/2405.07433},
	doi = {10.48550/arXiv.2405.07433},
	urldate = {2026-08-02},
	publisher = {arXiv},
	author = {Meister, Nadine and Pattison, Christopher A. and Preskill, John},
	month = jun,
	year = {2024},
	note = {arXiv:2405.07433 [quant-ph]},
}

@misc{wills_forced_2026,
	title = {Forced {Gap} {Post}-{Selection} for {Quantum} {LDPC} {Codes} and their {Operations}},
	url = {http://arxiv.org/abs/2605.20346},
	doi = {10.48550/arXiv.2605.20346},
	urldate = {2026-08-02},
	publisher = {arXiv},
	author = {Wills, Adam and Yoder, Theodore J. and Chuang, Isaac},
	month = jun,
	year = {2026},
	note = {arXiv:2605.20346 [quant-ph]},
}

@misc{xie_simple_2026,
	title = {Simple, {Efficient}, and {Generic} {Post}-{Selection} {Decoding} for {qLDPC} codes},
	url = {http://arxiv.org/abs/2601.17757},
	doi = {10.48550/arXiv.2601.17757},
	urldate = {2026-08-02},
	publisher = {arXiv},
	author = {Xie, Haipeng and Yoshioka, Nobuyuki and Tsubouchi, Kento and Li, Ying},
	month = jan,
	year = {2026},
	note = {arXiv:2601.17757 [quant-ph]},
}

@article{lee_efficient_2026,
	title = {Efficient {Post}-{Selection} for {General} {Quantum} {LDPC} {Codes}},
	volume = {12},
	issn = {2056-6387},
	url = {http://arxiv.org/abs/2510.05795},
	doi = {10.1038/s41534-026-01242-x},
	number = {1},
	urldate = {2026-08-02},
	journal = {npj Quantum Information},
	author = {Lee, Seok-Hyung and English, Lucas H. and Bartlett, Stephen D.},
	month = apr,
	year = {2026},
	note = {arXiv:2510.05795 [quant-ph]},
	pages = {96},
}

@misc{staples_scalable_2026,
	title = {Scalable {Postselection} of {Quantum} {Resources}},
	url = {http://arxiv.org/abs/2603.08697},
	doi = {10.48550/arXiv.2603.08697},
	urldate = {2026-08-02},
	publisher = {arXiv},
	author = {Staples, J. Wilson and Fu, Winston and Thompson, Jeff D.},
	month = mar,
	year = {2026},
	note = {arXiv:2603.08697 [quant-ph]},
}

@misc{chen_efficient_2025,
	title = {Efficient {Magic} {State} {Cultivation} on \${\textbackslash}mathbb\{{RP}\}{\textasciicircum}2\$},
	url = {http://arxiv.org/abs/2503.18657},
	doi = {10.48550/arXiv.2503.18657},
	urldate = {2026-08-02},
	publisher = {arXiv},
	author = {Chen, Zi-Han and Chen, Ming-Cheng and Lu, Chao-Yang and Pan, Jian-Wei},
	month = mar,
	year = {2025},
	note = {arXiv:2503.18657 [quant-ph]},
}

@misc{claes_cultivating_2025,
	title = {Cultivating {T} states on the surface code with only two-qubit gates},
	url = {http://arxiv.org/abs/2509.05232},
	doi = {10.48550/arXiv.2509.05232},
	urldate = {2026-08-02},
	publisher = {arXiv},
	author = {Claes, Jahan},
	month = sep,
	year = {2025},
	note = {arXiv:2509.05232 [quant-ph]},
}

@article{vaknin_high_2026,
	title = {High {Rate} {Magic} {State} {Cultivation} on the {Surface} {Code}},
	volume = {7},
	url = {https://link.aps.org/doi/10.1103/p8tw-6kq9},
	doi = {10.1103/p8tw-6kq9},
	number = {1},
	urldate = {2026-08-02},
	journal = {PRX Quantum},
	publisher = {American Physical Society},
	author = {Vaknin, Yotam and Jacoby, Shoham and Grimsmo, Arne and Retzker, Alex},
	month = mar,
	year = {2026},
	pages = {010353},
}

@misc{rosenfeld_magic_2025,
	title = {Magic state cultivation on a superconducting quantum processor},
	url = {http://arxiv.org/abs/2512.13908},
	doi = {10.48550/arXiv.2512.13908},
	urldate = {2026-08-02},
	publisher = {arXiv},
	author = {{Google Quantum AI} and Collaborators},
	month = dec,
	year = {2025},
	note = {arXiv:2512.13908 [quant-ph]},
}

@article{eastin_restrictions_2009,
	title = {Restrictions on {Transversal} {Encoded} {Quantum} {Gate} {Sets}},
	volume = {102},
	url = {https://link.aps.org/doi/10.1103/PhysRevLett.102.110502},
	doi = {10.1103/PhysRevLett.102.110502},
	number = {11},
	urldate = {2026-08-02},
	journal = {Physical Review Letters},
	publisher = {American Physical Society},
	author = {Eastin, Bryan and Knill, Emanuel},
	month = mar,
	year = {2009},
	pages = {110502},
}

@article{bravyi_universal_2005,
	title = {Universal quantum computation with ideal {Clifford} gates and noisy ancillas},
	volume = {71},
	copyright = {http://link.aps.org/licenses/aps-default-license},
	issn = {1050-2947, 1094-1622},
	url = {https://link.aps.org/doi/10.1103/PhysRevA.71.022316},
	doi = {10.1103/PhysRevA.71.022316},
	language = {en},
	number = {2},
	urldate = {2026-08-02},
	journal = {Physical Review A},
	author = {Bravyi, Sergey and Kitaev, Alexei},
	month = feb,
	year = {2005},
	pages = {022316},
}

@misc{gottesman_heisenberg_1998,
	title = {The {Heisenberg} {Representation} of {Quantum} {Computers}},
	url = {http://arxiv.org/abs/quant-ph/9807006},
	doi = {10.48550/arXiv.quant-ph/9807006},
	urldate = {2026-08-02},
	publisher = {arXiv},
	author = {Gottesman, Daniel},
	month = jul,
	year = {1998},
	note = {arXiv:quant-ph/9807006},
}

@misc{paetznick_demonstration_2024,
	title = {Demonstration of logical qubits and repeated error correction with better-than-physical error rates},
	url = {https://arxiv.org/abs/2404.02280v3},
	urldate = {2026-08-02},
	journal = {arXiv.org},
	author = {Paetznick, A. and others},
	month = apr,
	year = {2024},
}

@article{ryan-anderson_realization_2021,
	title = {Realization of {Real}-{Time} {Fault}-{Tolerant} {Quantum} {Error} {Correction}},
	volume = {11},
	url = {https://link.aps.org/doi/10.1103/PhysRevX.11.041058},
	doi = {10.1103/PhysRevX.11.041058},
	number = {4},
	urldate = {2026-08-02},
	journal = {Physical Review X},
	publisher = {American Physical Society},
	author = {Ryan-Anderson, C. and Bohnet, J. G. and Lee, K. and Gresh, D. and Hankin, A. and Gaebler, J. P. and Francois, D. and Chernoguzov, A. and Lucchetti, D. and Brown, N. C. and Gatterman, T. M. and Halit, S. K. and Gilmore, K. and Gerber, J. A. and Neyenhuis, B. and Hayes, D. and Stutz, R. P.},
	month = dec,
	year = {2021},
	pages = {041058},
}

@misc{reichardt_fault-tolerant_2025,
	title = {Fault-tolerant quantum computation with a neutral atom processor},
	url = {http://arxiv.org/abs/2411.11822},
	doi = {10.48550/arXiv.2411.11822},
	urldate = {2026-08-02},
	publisher = {arXiv},
	author = {Reichardt, Ben W. and others},
	month = jun,
	year = {2025},
	note = {arXiv:2411.11822 [quant-ph]},
}

@article{knill_resilient_1998,
	title = {Resilient quantum computation: error models and thresholds},
	volume = {454},
	issn = {1364-5021},
	shorttitle = {Resilient quantum computation},
	url = {https://doi.org/10.1098/rspa.1998.0166},
	doi = {10.1098/rspa.1998.0166},
	number = {1969},
	urldate = {2026-08-02},
	journal = {Proceedings of the Royal Society A: Mathematical, Physical and Engineering Sciences},
	author = {Knill, Emanuel and Laflamme, Raymond and Zurek, Wojciech H.},
	month = jan,
	year = {1998},
	pages = {365--384},
}

@article{abobeih_fault-tolerant_2022,
	title = {Fault-tolerant operation of a logical qubit in a diamond quantum processor},
	volume = {606},
	copyright = {2022 The Author(s)},
	issn = {1476-4687},
	url = {https://www.nature.com/articles/s41586-022-04819-6},
	doi = {10.1038/s41586-022-04819-6},
	language = {en},
	number = {7916},
	urldate = {2026-08-02},
	journal = {Nature},
	publisher = {Nature Publishing Group},
	author = {Abobeih, M. H. and Wang, Y. and Randall, J. and Loenen, S. J. H. and Bradley, C. E. and Markham, M. and Twitchen, D. J. and Terhal, B. M. and Taminiau, T. H.},
	month = jun,
	year = {2022},
	pages = {884--889},
}

@article{postler_demonstration_2022,
	title = {Demonstration of fault-tolerant universal quantum gate operations},
	volume = {605},
	copyright = {2022 The Author(s), under exclusive licence to Springer Nature Limited},
	issn = {1476-4687},
	url = {https://www.nature.com/articles/s41586-022-04721-1},
	doi = {10.1038/s41586-022-04721-1},
	language = {en},
	number = {7911},
	urldate = {2026-08-02},
	journal = {Nature},
	publisher = {Nature Publishing Group},
	author = {Postler, Lukas and Heu{\ss}en, Sascha and Pogorelov, Ivan and Rispler, Manuel and Feldker, Thomas and Meth, Michael and Marciniak, Christian D. and Stricker, Roman and Ringbauer, Martin and Blatt, Rainer and Schindler, Philipp and M{\"u}ller, Markus and Monz, Thomas},
	month = may,
	year = {2022},
	pages = {675--680},
}

@article{krinner_realizing_2022,
	title = {Realizing repeated quantum error correction in a distance-three surface code},
	volume = {605},
	copyright = {2022 The Author(s), under exclusive licence to Springer Nature Limited},
	issn = {1476-4687},
	url = {https://www.nature.com/articles/s41586-022-04566-8},
	doi = {10.1038/s41586-022-04566-8},
	language = {en},
	number = {7911},
	urldate = {2026-08-02},
	journal = {Nature},
	publisher = {Nature Publishing Group},
	author = {Krinner, Sebastian and Lacroix, Nathan and Remm, Ants and Di Paolo, Agustin and Genois, Elie and Leroux, Catherine and Hellings, Christoph and Lazar, Stefania and Swiadek, Francois and Herrmann, Johannes and Norris, Graham J. and Andersen, Christian Kraglund and M{\"u}ller, Markus and Blais, Alexandre and Eichler, Christopher and Wallraff, Andreas},
	month = may,
	year = {2022},
	pages = {669--674},
}

@article{acharya_quantum_2025,
	title = {Quantum error correction below the surface code threshold},
	volume = {638},
	copyright = {2024 The Author(s)},
	issn = {1476-4687},
	url = {https://www.nature.com/articles/s41586-024-08449-y},
	doi = {10.1038/s41586-024-08449-y},
	language = {en},
	number = {8052},
	urldate = {2026-08-02},
	journal = {Nature},
	publisher = {Nature Publishing Group},
	author = {{Google Quantum AI} and Collaborators},
	month = feb,
	year = {2025},
	pages = {920--926},
}

@article{acharya_suppressing_2023,
	title = {Suppressing quantum errors by scaling a surface code logical qubit},
	volume = {614},
	copyright = {2023 The Author(s)},
	issn = {1476-4687},
	url = {https://www.nature.com/articles/s41586-022-05434-1},
	doi = {10.1038/s41586-022-05434-1},
	language = {en},
	number = {7949},
	urldate = {2026-08-02},
	journal = {Nature},
	publisher = {Nature Publishing Group},
	author = {{Google Quantum AI}},
	month = feb,
	year = {2023},
	pages = {676--681},
}

@article{bluvstein_fault-tolerant_2026,
	title = {A fault-tolerant neutral-atom architecture for universal quantum computation},
	volume = {649},
	copyright = {2025 The Author(s)},
	issn = {1476-4687},
	url = {https://www.nature.com/articles/s41586-025-09848-5},
	doi = {10.1038/s41586-025-09848-5},
	language = {en},
	number = {8095},
	urldate = {2026-08-02},
	journal = {Nature},
	publisher = {Nature Publishing Group},
	author = {Bluvstein, Dolev and Geim, Alexandra A. and Li, Sophie H. and Evered, Simon J. and Bonilla Ataides, J. Pablo and Baranes, Gefen and Gu, Andi and Manovitz, Tom and Xu, Muqing and Kalinowski, Marcin and Majidy, Shayan and Kokail, Christian and Maskara, Nishad and Trapp, Elias C. and Stewart, Luke M. and Hollerith, Simon and Zhou, Hengyun and Gullans, Michael J. and Yelin, Susanne F. and Greiner, Markus and Vuleti{\'c}, Vladan and Cain, Madelyn and Lukin, Mikhail D.},
	month = jan,
	year = {2026},
	pages = {39--46},
}

@article{terhal_quantum_2015,
	title = {Quantum error correction for quantum memories},
	volume = {87},
	url = {https://link.aps.org/doi/10.1103/RevModPhys.87.307},
	doi = {10.1103/RevModPhys.87.307},
	number = {2},
	urldate = {2026-08-02},
	journal = {Reviews of Modern Physics},
	publisher = {American Physical Society},
	author = {Terhal, Barbara M.},
	month = apr,
	year = {2015},
	pages = {307--346},
}

@article{aharonov_fault-tolerant_2008,
	title = {Fault-{Tolerant} {Quantum} {Computation} with {Constant} {Error} {Rate}},
	volume = {38},
	issn = {0097-5397},
	url = {https://epubs.siam.org/doi/10.1137/S0097539799359385},
	doi = {10.1137/S0097539799359385},
	number = {4},
	urldate = {2026-08-02},
	journal = {SIAM Journal on Computing},
	publisher = {Society for Industrial and Applied Mathematics},
	author = {Aharonov, Dorit and Ben-Or, Michael},
	month = jan,
	year = {2008},
	pages = {1207--1282},
}

@article{steane_error_1996,
	title = {Error {Correcting} {Codes} in {Quantum} {Theory}},
	volume = {77},
	url = {https://link.aps.org/doi/10.1103/PhysRevLett.77.793},
	doi = {10.1103/PhysRevLett.77.793},
	number = {5},
	urldate = {2026-08-02},
	journal = {Physical Review Letters},
	publisher = {American Physical Society},
	author = {Steane, A. M.},
	month = jul,
	year = {1996},
	pages = {793--797},
}

@article{shor_scheme_1995,
	title = {Scheme for reducing decoherence in quantum computer memory},
	volume = {52},
	url = {https://link.aps.org/doi/10.1103/PhysRevA.52.R2493},
	doi = {10.1103/PhysRevA.52.R2493},
	number = {4},
	urldate = {2026-08-02},
	journal = {Physical Review A},
	publisher = {American Physical Society},
	author = {Shor, Peter W.},
	month = oct,
	year = {1995},
	pages = {R2493--R2496},
}

@article{sahay2026fold,
  title = {Fold-transversal surface code cultivation},
  author = {Sahay, Kaavya and Tsai, Pei-Kai and Chang, Kathleen (Katie) and Su, Qile and Smith, Thomas B. and Singh, Shraddha and Puri, Shruti},
  journal = {PRX Quantum},
  volume = {7},
  issue = {3},
  pages = {033006},
  numpages = {26},
  year = {2026},
  month = {Jul},
  publisher = {American Physical Society},
  doi = {10.1103/gpvl-lg4c},
  url = {https://link.aps.org/doi/10.1103/gpvl-lg4c}
}

@misc{pesah_fault-tolerant_2025,
	title = {Fault-tolerant transformations of spacetime codes},
	url = {http://arxiv.org/abs/2509.09603},
	doi = {10.48550/arXiv.2509.09603},
	urldate = {2026-04-15},
	publisher = {arXiv},
	author = {Pesah, Arthur and Daniel, Austin K. and Tzitrin, Ilan and Vasmer, Michael},
	month = sep,
	year = {2025},
	note = {arXiv:2509.09603 [quant-ph]},
}

@misc{cain_shors_2026,
	title = {Shor's algorithm is possible with as few as 10,000 reconfigurable atomic qubits},
	url = {http://arxiv.org/abs/2603.28627},
	doi = {10.48550/arXiv.2603.28627},
	urldate = {2026-04-14},
	publisher = {arXiv},
	author = {Cain, Madelyn and Xu, Qian and King, Robbie and Picard, Lewis R. B. and Levine, Harry and Endres, Manuel and Preskill, John and Huang, Hsin-Yuan and Bluvstein, Dolev},
	month = mar,
	year = {2026},
	note = {arXiv:2603.28627 [quant-ph]},
}

@article{gottesman_demonstrating_1999,
	title = {Demonstrating the viability of universal quantum computation using teleportation and single-qubit operations},
	volume = {402},
	copyright = {1999 Macmillan Magazines Ltd.},
	issn = {1476-4687},
	url = {https://www.nature.com/articles/46503},
	doi = {10.1038/46503},
	language = {en},
	number = {6760},
	urldate = {2026-01-28},
	journal = {Nature},
	publisher = {Nature Publishing Group},
	author = {Gottesman, Daniel and Chuang, Isaac L.},
	month = nov,
	year = {1999},
	pages = {390--393},
}

@article{gidney_stim_2021,
	title = {Stim: a fast stabilizer circuit simulator},
	volume = {5},
	issn = {2521-327X},
	shorttitle = {Stim},
	url = {http://arxiv.org/abs/2103.02202},
	doi = {10.22331/q-2021-07-06-497},
	urldate = {2026-01-26},
	journal = {Quantum},
	author = {Gidney, Craig},
	month = jul,
	year = {2021},
	note = {arXiv:2103.02202 [quant-ph]},
	pages = {497},
}

@misc{gu_scalable_2026,
      title={Scalable Neural Decoders for Practical Fault-Tolerant Quantum Computation}, 
      author={Andi Gu and J. Pablo Bonilla Ataides and Mikhail D. Lukin and Susanne F. Yelin},
      year={2026},
      eprint={2604.08358},
      archivePrefix={arXiv},
      primaryClass={quant-ph},
      url={https://arxiv.org/abs/2604.08358}, 
}

@misc{jacob_single-shot_2025,
	title = {Single-{Shot} {Decoding} and {Fault}-tolerant {Gates} with {Trivariate} {Tricycle} {Codes}},
	url = {http://arxiv.org/abs/2508.08191},
	doi = {10.48550/arXiv.2508.08191},
	urldate = {2026-01-14},
	publisher = {arXiv},
	author = {Jacob, Abraham and McLauchlan, Campbell and Browne, Dan E.},
	month = aug,
	year = {2025},
	note = {arXiv:2508.08191 [quant-ph]},
}

@misc{li_transversal_2025,
	title = {Transversal dimension jump for product {qLDPC} codes},
	url = {http://arxiv.org/abs/2510.07269},
	doi = {10.48550/arXiv.2510.07269},
	urldate = {2026-01-14},
	publisher = {arXiv},
	author = {Li, Christine and Preskill, John and Xu, Qian},
	month = oct,
	year = {2025},
	note = {arXiv:2510.07269 [quant-ph]},
}

@article{menon_magic_2025,
	title = {Magic {Tricycles}: {Efficient} {Magic}-{State} {Generation} with {Finite} {Block}-{Length} {Quantum} {LDPC} {Codes}},
	shorttitle = {Magic tricycles},
	author = {Menon, Varun and Bonilla Ataides, J. Pablo and Mehta, Rohan and Gu, Andi and Tan, Daniel Bochen and Lukin, Mikhail D.},
	journal = {Physical Review X},
	volume = {16},
	number = {2},
	pages = {021014},
	year = {2026},
	month = apr,
	doi = {10.1103/ghhp-cytl},
	url = {https://doi.org/10.1103/ghhp-cytl},
	eprint = {2508.10714},
	archivePrefix = {arXiv},
	note = {arXiv:2508.10714},
}

@misc{gidney_magic_2024,
	title = {Magic state cultivation: growing {T} states as cheap as {CNOT} gates},
	shorttitle = {Magic state cultivation},
	url = {http://arxiv.org/abs/2409.17595},
	doi = {10.48550/arXiv.2409.17595},
	urldate = {2026-01-14},
	publisher = {arXiv},
	author = {Gidney, Craig and Shutty, Noah and Jones, Cody},
	month = sep,
	year = {2024},
	note = {arXiv:2409.17595 [quant-ph]},
}

@article{kitaev_fault-tolerant_2003,
	title = {Fault-tolerant quantum computation by anyons},
	volume = {303},
	issn = {00034916},
	url = {http://arxiv.org/abs/quant-ph/9707021},
	doi = {10.1016/S0003-4916(02)00018-0},
	number = {1},
	urldate = {2026-01-14},
	journal = {Annals of Physics},
	author = {Kitaev, A. Yu},
	month = jan,
	year = {2003},
	note = {arXiv:quant-ph/9707021},
	pages = {2--30},
}

@article{fowler_surface_2012,
	title = {Surface codes: {Towards} practical large-scale quantum computation},
	volume = {86},
	copyright = {http://link.aps.org/licenses/aps-default-license},
	issn = {1050-2947, 1094-1622},
	shorttitle = {Surface codes},
	url = {https://link.aps.org/doi/10.1103/PhysRevA.86.032324},
	doi = {10.1103/PhysRevA.86.032324},
	language = {en},
	number = {3},
	urldate = {2026-01-14},
	journal = {Physical Review A},
	author = {Fowler, Austin G. and Mariantoni, Matteo and Martinis, John M. and Cleland, Andrew N.},
	month = sep,
	year = {2012},
	pages = {032324},
}

@article{bluvstein_logical_2024,
	title = {Logical quantum processor based on reconfigurable atom arrays},
	volume = {626},
	copyright = {2023 The Author(s)},
	issn = {1476-4687},
	url = {https://www.nature.com/articles/s41586-023-06927-3},
	doi = {10.1038/s41586-023-06927-3},
	language = {en},
	number = {7997},
	urldate = {2026-01-14},
	journal = {Nature},
	publisher = {Nature Publishing Group},
	author = {Bluvstein, Dolev and Evered, Simon J. and Geim, Alexandra A. and Li, Sophie H. and Zhou, Hengyun and Manovitz, Tom and Ebadi, Sepehr and Cain, Madelyn and Kalinowski, Marcin and Hangleiter, Dominik and Bonilla Ataides, J. Pablo and Maskara, Nishad and Cong, Iris and Gao, Xun and Sales Rodriguez, Pedro and Karolyshyn, Thomas and Semeghini, Giulia and Gullans, Michael J. and Greiner, Markus and Vuleti{\'c}, Vladan and Lukin, Mikhail D.},
	month = feb,
	year = {2024},
	pages = {58--65},
}

@misc{mishima2026adaptive,
      title={Adaptive Window Decoding based on Spatiotemporal Complementary Gap}, 
      author={Moeto Mishima and Riki Toshio and Kaito Kishi and Jun Fujisaki and Hirotaka Oshima and Shintaro Sato and Keisuke Fujii},
      year={2026},
      eprint={2605.14637},
      archivePrefix={arXiv},
      primaryClass={quant-ph},
      url={https://arxiv.org/abs/2605.14637}, 
}

@misc{hirano2026efficient,
      title={Efficient magic state cultivation with lattice surgery}, 
      author={Yutaka Hirano and Riki Toshio and Tomohiro Itogawa and Keisuke Fujii},
      year={2026},
      eprint={2510.24615},
      archivePrefix={arXiv},
      primaryClass={quant-ph},
      url={https://arxiv.org/abs/2510.24615}, 
}

@misc{chen2026scalable,
      title={Scalable accuracy gains from postselection in quantum error correcting codes}, 
      author={Hongkun Chen and Daohong Xu and Grace M. Sommers and David A. Huse and Jeff D. Thompson and Sarang Gopalakrishnan},
      year={2026},
      eprint={2510.05222},
      archivePrefix={arXiv},
      primaryClass={cond-mat.stat-mech},
      url={https://arxiv.org/abs/2510.05222}, 
}

@article{iyer_hardness_2015,
    title = {Hardness of {Decoding} {Quantum} {Stabilizer} {Codes}},
    volume = {61},
    issn = {1557-9654},
    url = {https://ieeexplore.ieee.org/document/7097029},
    doi = {10.1109/TIT.2015.2422294},
    number = {9},
    urldate = {2026-09-14},
    journal = {IEEE Transactions on Information Theory},
    author = {Iyer, Pavithran and Poulin, David},
    month = sep,
    year = {2015},
    pages = {5209--5223},
}

@article{shutty_efficient_2026,
    title = {Efficient {Near}-{Optimal} {Decoding} of the {Surface} {Code} through {Ensembling}},
    volume = {136},
    url = {https://link.aps.org/doi/10.1103/77j6-32xx},
    doi = {10.1103/77j6-32xx},
    number = {7},
    urldate = {2026-09-14},
    journal = {Physical Review Letters},
    publisher = {American Physical Society},
    author = {Shutty, Noah and Newman, Michael and Villalonga, Benjamin},
    month = feb,
    year = {2026},
    pages = {070603},
}

@article{bonilla_ataides_neural_2026,
    title = {Neural {Decoders} for {Universal} {Quantum} {Algorithms}},
    volume = {1},
    url = {https://link.aps.org/doi/10.1103/vjn1-mbxl},
    doi = {10.1103/vjn1-mbxl},
    number = {1},
    urldate = {2026-09-14},
    journal = {PRX Intelligence},
    publisher = {American Physical Society},
    author = {Bonilla Ataides, J. Pablo and Gu, Andi and Yelin, Susanne F. and Lukin, Mikhail D.},
    month = aug,
    year = {2026},
    pages = {013008},
}

@article{blue_machine_2026,
    title = {Machine {Learning} {Decoding} of {Circuit}-{Level} {Noise} for {Bivariate} {Bicycle} {Codes}},
    volume = {10},
    url = {https://quantum-journal.org/papers/q-2026-06-30-2149/},
    doi = {10.22331/q-2026-06-30-2149},
    language = {en-GB},
    urldate = {2026-09-14},
    journal = {Quantum},
    publisher = {Verein zur Förderung des Open Access Publizierens in den Quantenwissenschaften},
    author = {Blue, John and Avlani, Harshil and He, Zhiyang and Ziyin, Liu and Chuang, Isaac L.},
    month = jun,
    year = {2026},
    pages = {2149},
}

@misc{baireuther_machine-learning-assisted_2017,
    title = {Machine-learning-assisted correction of correlated qubit errors in a topological code},
    url = {https://arxiv.org/abs/1705.07855v3},
    doi = {10.22331/q-2018-01-29-48},
    language = {en},
    urldate = {2026-09-14},
    journal = {arXiv.org},
    author = {Baireuther, P. and O'Brien, T. E. and Tarasinski, B. and Beenakker, C. W. J.},
    month = may,
    year = {2017},
}

@article{baireuther_neural_2019,
    title = {Neural network decoder for topological color codes with circuit level noise},
    volume = {21},
    issn = {1367-2630},
    url = {https://doi.org/10.1088/1367-2630/aaf29e},
    doi = {10.1088/1367-2630/aaf29e},
    language = {en},
    number = {1},
    urldate = {2026-09-14},
    journal = {New Journal of Physics},
    publisher = {IOP Publishing},
    author = {Baireuther, P and Caio, M D and Criger, B and Beenakker, C W J and O’Brien, T E},
    month = jan,
    year = {2019},
    pages = {013003},
}

@article{lacroix_scaling_2025,
    title = {Scaling and logic in the colour code on a superconducting quantum processor},
    volume = {645},
    copyright = {2025 The Author(s)},
    issn = {1476-4687},
    url = {https://www.nature.com/articles/s41586-025-09061-4},
    doi = {10.1038/s41586-025-09061-4},
    language = {en},
    number = {8081},
    urldate = {2026-09-14},
    journal = {Nature},
    publisher = {Nature Publishing Group},
    author = {Lacroix, N. and others},
    month = sep,
    year = {2025},
    pages = {614--619},
}

@article{bausch_learning_2024,
    title = {Learning high-accuracy error decoding for quantum processors},
    volume = {635},
    copyright = {2024 The Author(s)},
    issn = {1476-4687},
    url = {https://www.nature.com/articles/s41586-024-08148-8},
    doi = {10.1038/s41586-024-08148-8},
    language = {en},
    number = {8040},
    urldate = {2026-09-14},
    journal = {Nature},
    publisher = {Nature Publishing Group},
    author = {Bausch, Johannes and Senior, Andrew W. and Heras, Francisco J. H. and Edlich, Thomas and Davies, Alex and Newman, Michael and Jones, Cody and Satzinger, Kevin and Niu, Murphy Yuezhen and Blackwell, Sam and Holland, George and Kafri, Dvir and Atalaya, Juan and Gidney, Craig and Hassabis, Demis and Boixo, Sergio and Neven, Hartmut and Kohli, Pushmeet},
    month = nov,
    year = {2024},
    pages = {834--840},
}

@article{meinerz_scalable_2022,
    title = {Scalable {Neural} {Decoder} for {Topological} {Surface} {Codes}},
    volume = {128},
    url = {https://link.aps.org/doi/10.1103/PhysRevLett.128.080505},
    doi = {10.1103/PhysRevLett.128.080505},
    number = {8},
    urldate = {2026-09-14},
    journal = {Physical Review Letters},
    publisher = {American Physical Society},
    author = {Meinerz, Kai and Park, Chae-Yeun and Trebst, Simon},
    month = feb,
    year = {2022},
    pages = {080505},
}

@article{lange_data-driven_2025,
    title = {Data-driven decoding of quantum error correcting codes using graph neural networks},
    volume = {7},
    url = {https://link.aps.org/doi/10.1103/PhysRevResearch.7.023181},
    doi = {10.1103/PhysRevResearch.7.023181},
    number = {2},
    urldate = {2026-09-14},
    journal = {Physical Review Research},
    publisher = {American Physical Society},
    author = {Lange, Moritz and Havström, Pontus and Srivastava, Basudha and Bengtsson, Isak and Bergentall, Valdemar and Hammar, Karl and Heuts, Olivia and van Nieuwenburg, Evert and Granath, Mats},
    month = may,
    year = {2025},
    pages = {023181},
}

@misc{cao_generative_2025,
    title = {Generative {Decoding} for {Quantum} {Error}-correcting {Codes}},
    url = {http://arxiv.org/abs/2503.21374},
    doi = {10.48550/arXiv.2503.21374},
    urldate = {2026-09-14},
    publisher = {arXiv},
    author = {Cao, Hanyan and Pan, Feng and Feng, Dongyang and Wang, Yijia and Zhang, Pan},
    month = mar,
    year = {2025},
    note = {arXiv:2503.21374 [quant-ph]},
}

@misc{liu_decoding_2025,
    title = {Decoding quantum low density parity check codes with diffusion},
    url = {https://arxiv.org/abs/2509.22347v1},
    language = {en},
    urldate = {2026-09-14},
    journal = {arXiv.org},
    author = {Liu, Zejun and Gong, Anqi and Clark, Bryan K.},
    month = sep,
    year = {2025},
}

@misc{hu2025efficient,
      title={Efficient and Universal Neural-Network Decoder for Stabilizer-Based Quantum Error Correction}, 
      author={Gengyuan Hu and Wanli Ouyang and Chao-Yang Lu and Chen Lin and Han-Sen Zhong},
      year={2025},
      eprint={2502.19971},
      archivePrefix={arXiv},
      primaryClass={quant-ph},
      url={https://arxiv.org/abs/2502.19971}, 
}

@misc{zhang2025learning,
      title={Learning Neural Decoding with Parallelism and Self-Coordination for Quantum Error Correction}, 
      author={Kai Zhang and Situ Wang and Linghang Kong and Fang Zhang and Zhengfeng Ji and Jianxin Chen},
      year={2025},
      eprint={2509.03815},
      archivePrefix={arXiv},
      primaryClass={quant-ph},
      url={https://arxiv.org/abs/2509.03815}, 
}

@misc{dentelski2026neural,
      title={Neural network decoder confidence as a learned proxy for the logical gap}, 
      author={David Dentelski},
      year={2026},
      eprint={2606.08758},
      archivePrefix={arXiv},
      primaryClass={quant-ph},
      url={https://arxiv.org/abs/2606.08758}, 
}

@misc{computing2026quantumerrorcorrectiontoric,
      title={Quantum error correction with the toric code}, 
      author={{Atom Computing} and Collaborators},
      year={2026},
      eprint={2606.04079},
      archivePrefix={arXiv},
      primaryClass={quant-ph},
      url={https://arxiv.org/abs/2606.04079}, 
}

@misc{hetenyi2026constantdepthmagicstate,
      title={Constant depth magic state cultivation with Clifford measurements by gauging}, 
      author={Bence Hetényi and Benjamin J. Brown and Dominic J. Williamson},
      year={2026},
      eprint={2603.05429},
      archivePrefix={arXiv},
      primaryClass={quant-ph},
      url={https://arxiv.org/abs/2603.05429}, 
}

@article{webster2026pinnacle,
  title={The Pinnacle Architecture: Reducing the cost of breaking RSA-2048 to 100 000 physical qubits using quantum LDPC codes},
  author={Webster, Paul and Berent, Lucas and Chandra, Omprakash and Hockings, Evan T and Baspin, Nou{\'e}dyn and Thomsen, Felix and Smith, Samuel C and Cohen, Lawrence Z},
  journal={arXiv preprint arXiv:2602.11457},
  year={2026}
}

@article{xue2026factoring,
  title={Factoring $2048 $ bit RSA integers with a half-million-qubit modular atomic processor},
  author={Xue, Tian and Covey, Jacob P},
  journal={arXiv preprint arXiv:2605.03951},
  year={2026}
}

@article{khan2026architecting,
  title={Architecting Early Fault Tolerant Neutral Atoms Systems with Quantum Advantage},
  author={Khan, Sahil and Sethi, Sayam and Sahay, Kaavya and Lin, Yingjia and Alnas, Jude and Kurapati, Suhas and Anand, Abhinav and Baker, Jonathan M and Brown, Kenneth R},
  journal={arXiv preprint arXiv:2604.19735},
  year={2026}
}

@article{yoder2025tour,
  title={Tour de gross: A modular quantum computer based on bivariate bicycle codes},
  author={Yoder, Theodore J and Schoute, Eddie and Rall, Patrick and Pritchett, Emily and Gambetta, Jay M and Cross, Andrew W and Carroll, Malcolm and Beverland, Michael E},
  journal={arXiv preprint arXiv:2506.03094},
  year={2025}
}

@article{li2025soft,
  title={SOFT: a high-performance simulator for universal fault-tolerant quantum circuits},
  author={Li, Riling and Zheng, Keli and Zhang, Yiming and Lou, Huazhe and Ying, Shenggang and Liu, Ke and Sun, Xiaoming},
  journal={arXiv preprint arXiv:2512.23037},
  year={2025}
}

@misc{takada2026exactefficientsimulationnoisy,
      title={Exact efficient simulation of noisy logical magic states using Clifford stabilizers}, 
      author={Yugo Takada and Stephen D. Bartlett and Dominic J. Williamson},
      year={2026},
      eprint={2609.16929},
      archivePrefix={arXiv},
      primaryClass={quant-ph},
      url={https://arxiv.org/abs/2609.16929}, 
}

@misc{hartweg_comparing_2026,
    title = {Comparing magic state cultivation methods using matrix product states},
    url = {http://arxiv.org/abs/2609.19116},
    doi = {10.48550/arXiv.2609.19116},
    urldate = {2026-09-17},
    publisher = {arXiv},
    author = {Hartweg, Tom and Orioli, Asier Piñeiro},
    month = sep,
    year = {2026},
    note = {arXiv:2609.19116 [quant-ph]},
}

@misc{wan_exact_2026,
    title = {Exact logical error rates for magic state cultivation},
    url = {http://arxiv.org/abs/2609.18922},
    doi = {10.48550/arXiv.2609.18922},
    urldate = {2026-09-17},
    publisher = {arXiv},
    author = {Wan, Kwok Ho and Zapirain, Ainhoa},
    month = sep,
    year = {2026},
    note = {arXiv:2609.18922 [quant-ph]},
}

@misc{chase_clifft_2026,
    title = {Clifft: {Fast} {Exact} {Simulation} of {Near}-{Clifford} {Quantum} {Circuits}},
    shorttitle = {Clifft},
    url = {http://arxiv.org/abs/2604.27058},
    doi = {10.48550/arXiv.2604.27058},
    urldate = {2026-09-17},
    publisher = {arXiv},
    author = {Chase, Bradley A. and Labib, Farrokh},
    month = may,
    year = {2026},
    note = {arXiv:2604.27058 [quant-ph]},
}

@misc{chan_diagnosing_2026,
    title = {Diagnosing and {Restoring} the {Degraded} {Fault} {Distance} of {Magic} {State} {Cultivation}},
    url = {http://arxiv.org/abs/2609.17706},
    doi = {10.48550/arXiv.2609.17706},
    urldate = {2026-09-17},
    publisher = {arXiv},
    author = {Chan, Tim and Strikis, Armands and Sun, Zhu and Cai, Zhenyu},
    month = sep,
    year = {2026},
    note = {arXiv:2609.17706 [quant-ph]},
}

@article{higgott_pymatching_2022,
	title = {{PyMatching: A Python package for decoding quantum codes with minimum-weight perfect matching}},
	author = {Oscar Higgott},
	journal = {ACM Transactions on Quantum Computing},
	volume = {3},
	number = {3},
	pages = {1--16},
	year = {2022},
	month = {sep},
	doi = {10.1145/3505637},
	url = {https://doi.org/10.1145/3505637},
	eprint = {2105.13082},
	archivePrefix = {arXiv},
	note = {arXiv:2105.13082},
}

@misc{gidney_flexible_2019,
	title = {{Flexible layout of surface code computations using AutoCCZ states}},
	author = {Craig Gidney and Austin G. Fowler},
	year = {2019},
	publisher = {arXiv},
	url = {https://arxiv.org/abs/1905.08916},
	doi = {10.48550/arXiv.1905.08916},
	eprint = {1905.08916},
	archivePrefix = {arXiv},
	note = {arXiv:1905.08916 [quant-ph]},
}

@inproceedings{wu_micro_2025,
	title = {{Micro Blossom: Accelerated Minimum-Weight Perfect Matching Decoding for Quantum Error Correction}},
	author = {Yue Wu and Namitha Liyanage and Lin Zhong},
	booktitle = {Proceedings of the 30th ACM International Conference on Architectural Support for Programming Languages and Operating Systems, Volume 2},
	pages = {639--654},
	year = {2025},
	month = {mar},
	doi = {10.1145/3676641.3716005},
	url = {https://doi.org/10.1145/3676641.3716005},
	eprint = {2502.14787},
	archivePrefix = {arXiv},
	note = {arXiv:2502.14787},
}

@article{smith_mitigating_2024,
	title = {{Mitigating errors in logical qubits}},
	author = {Samuel C. Smith and Benjamin J. Brown and Stephen D. Bartlett},
	journal = {Communications Physics},
	volume = {7},
	number = {1},
	pages = {386},
	year = {2024},
	month = {nov},
	doi = {10.1038/s42005-024-01883-4},
	url = {https://doi.org/10.1038/s42005-024-01883-4},
	eprint = {2405.03766},
	archivePrefix = {arXiv},
	note = {arXiv:2405.03766},
}

@misc{fowler_time-optimal_2012,
	title = {{Time-optimal quantum computation}},
	author = {Austin G. Fowler},
	year = {2013},
	publisher = {arXiv},
	url = {https://arxiv.org/abs/1210.4626},
	doi = {10.48550/arXiv.1210.4626},
	eprint = {1210.4626},
	archivePrefix = {arXiv},
	note = {arXiv:1210.4626 [quant-ph]},
}

@article{duclos-cianci_fast_2010,
	title = {{Fast Decoders for Topological Quantum Codes}},
	author = {Guillaume Duclos-Cianci and David Poulin},
	journal = {Physical Review Letters},
	volume = {104},
	number = {5},
	pages = {050504},
	year = {2010},
	month = {feb},
	doi = {10.1103/PhysRevLett.104.050504},
	url = {https://doi.org/10.1103/PhysRevLett.104.050504},
	eprint = {0911.0581},
	archivePrefix = {arXiv},
	note = {arXiv:0911.0581},
}

@inproceedings{das_scalable_2022,
	title = {{AFS: Accurate, Fast, and Scalable Error-Decoding for Fault-Tolerant Quantum Computers}},
	author = {Poulami Das and Christopher A. Pattison and Srilatha Manne and Douglas Carmean and Krysta Svore and Moinuddin Qureshi and Nicolas Delfosse},
	booktitle = {2022 IEEE International Symposium on High-Performance Computer Architecture (HPCA)},
	pages = {259--273},
	year = {2022},
	month = {apr},
	doi = {10.1109/hpca53966.2022.00027},
	url = {https://doi.org/10.1109/hpca53966.2022.00027},
	eprint = {2001.06598},
	archivePrefix = {arXiv},
	note = {arXiv:2001.06598},
}

@inproceedings{holmes_nisq_2020,
	title = {{NISQ+: Boosting quantum computing power by approximating quantum error correction}},
	author = {Adam Holmes and Mohammad Reza Jokar and Ghasem Pasandi and Yongshan Ding and Massoud Pedram and Frederic T. Chong},
	booktitle = {2020 ACM/IEEE 47th Annual International Symposium on Computer Architecture (ISCA)},
	pages = {556--569},
	year = {2020},
	month = {may},
	doi = {10.1109/isca45697.2020.00053},
	url = {https://doi.org/10.1109/isca45697.2020.00053},
	eprint = {2004.04794},
	archivePrefix = {arXiv},
	note = {arXiv:2004.04794},
}

@inproceedings{liyanage_scalable_2023,
	title = {{Scalable Quantum Error Correction for Surface Codes using FPGA}},
	author = {Namitha Liyanage and Yue Wu and Alexander Deters and Lin Zhong},
	booktitle = {2023 IEEE International Conference on Quantum Computing and Engineering (QCE)},
	pages = {916--927},
	year = {2023},
	month = {sep},
	doi = {10.1109/qce57702.2023.00106},
	url = {https://doi.org/10.1109/qce57702.2023.00106},
	eprint = {2301.08419},
	archivePrefix = {arXiv},
	note = {arXiv:2301.08419},
}

@article{nishimori_internal_1981,
	title = {{Internal Energy, Specific Heat and Correlation Function of the Bond-Random Ising Model}},
	author = {Nishimori, Hidetoshi},
	journal = {Progress of Theoretical Physics},
	volume = {66},
	number = {4},
	pages = {1169--1181},
	year = {1981},
	month = oct,
	doi = {10.1143/PTP.66.1169},
	url = {https://doi.org/10.1143/PTP.66.1169},
}

@misc{sun2026foldedsurfacecodearchitecture,
      title={A Folded Surface Code Architecture for 2D Quantum Hardware}, 
      author={Zhu Sun and Zhenyu Cai},
      year={2026},
      eprint={2601.19823},
      archivePrefix={arXiv},
      primaryClass={quant-ph},
      url={https://arxiv.org/abs/2601.19823}, 
}

@misc{khalid2026impactsdecoderlatencyutilityscale,
      title={Impacts of Decoder Latency on a Utility-Scale Quantum Computer Architecture}, 
      author={Abdullah Khalid and Allyson Silva and Gebremedhin A. Dagnew and Tom Dvir and Oded Wertheim and Motty Gruda and Xiangzhou Kong and Mia Kramer and Zak Webb and Artur Scherer and Masoud Mohseni and Yonatan Cohen and Pooya Ronagh},
      year={2026},
      eprint={2511.10633},
      archivePrefix={arXiv},
      primaryClass={quant-ph},
      url={https://arxiv.org/abs/2511.10633}, 
}

@misc{fazio2025lowoverheadmagicstatecircuits,
      title={Low-overhead Magic State Circuits with Transversal CNOTs}, 
      author={Nicholas Fazio and Mark Webster and Zhenyu Cai},
      year={2025},
      eprint={2501.10291},
      archivePrefix={arXiv},
      primaryClass={quant-ph},
      url={https://arxiv.org/abs/2501.10291}, 
}

@misc{landahl2011faulttolerantquantumcomputingcolor,
      title={Fault-tolerant quantum computing with color codes}, 
      author={Andrew J. Landahl and Jonas T. Anderson and Patrick R. Rice},
      year={2011},
      eprint={1108.5738},
      archivePrefix={arXiv},
      primaryClass={quant-ph},
      url={https://arxiv.org/abs/1108.5738}, 
}

@ARTICLE{feldman2005using,
  author={Feldman, J. and Wainwright, M.J. and Karger, D.R.},
  journal={IEEE Transactions on Information Theory}, 
  title={Using linear programming to Decode Binary linear codes}, 
  year={2005},
  volume={51},
  number={3},
  pages={954-972},
  doi={10.1109/TIT.2004.842696}}

@article{zeng_transversality_2011,
	title = {{Transversality versus Universality for Additive Quantum Codes}},
	author = {Bei Zeng and Andrew Cross and Isaac L. Chuang},
	journal = {IEEE Transactions on Information Theory},
	volume = {57},
	number = {9},
	pages = {6272--6284},
	year = {2011},
	month = {sep},
	doi = {10.1109/tit.2011.2161917},
	url = {https://doi.org/10.1109/tit.2011.2161917},
	eprint = {0706.1382},
	archivePrefix = {arXiv},
	note = {arXiv:0706.1382},
}

@misc{knill_fault-tolerant_2004,
	title = {{Fault-Tolerant Postselected Quantum Computation: Schemes}},
	author = {E. Knill},
	year = {2004},
	publisher = {arXiv},
	url = {https://arxiv.org/abs/quant-ph/0402171},
	doi = {10.48550/arXiv.quant-ph/0402171},
	eprint = {quant-ph/0402171},
	archivePrefix = {arXiv},
	note = {arXiv:quant-ph/0402171 [quant-ph]},
}

@article{chamberland_fault-tolerant_2019,
	title = {{Fault-tolerant magic state preparation with flag qubits}},
	author = {Christopher Chamberland and Andrew W. Cross},
	journal = {Quantum},
	volume = {3},
	pages = {143},
	year = {2019},
	month = {may},
	doi = {10.22331/q-2019-05-20-143},
	url = {https://doi.org/10.22331/q-2019-05-20-143},
	eprint = {1811.00566},
	archivePrefix = {arXiv},
	note = {arXiv:1811.00566},
}

@article{derks_designing_2024,
	title = {{Designing fault-tolerant circuits using detector error models}},
	author = {Peter-Jan H. S. Derks and Alex Townsend-Teague and Ansgar G. Burchards and Jens Eisert},
	journal = {Quantum},
	volume = {9},
	pages = {1905},
	year = {2025},
	month = {nov},
	doi = {10.22331/q-2025-11-06-1905},
	url = {https://doi.org/10.22331/q-2025-11-06-1905},
	eprint = {2407.13826},
	archivePrefix = {arXiv},
	note = {arXiv:2407.13826},
}

@inproceedings{zhou_resource_2025,
	title = {{Resource Analysis of Low-Overhead Transversal Architectures for Reconfigurable Atom Arrays}},
	author = {Hengyun Zhou and Casey Duckering and Chen Zhao and Dolev Bluvstein and Madelyn Cain and Aleksander Kubica and Sheng-Tao Wang and Mikhail D. Lukin},
	booktitle = {Proceedings of the 52nd Annual International Symposium on Computer Architecture},
	pages = {1432--1448},
	year = {2025},
	month = {jun},
	doi = {10.1145/3695053.3731039},
	url = {https://doi.org/10.1145/3695053.3731039},
	eprint = {2505.15907},
	archivePrefix = {arXiv},
	note = {arXiv:2505.15907},
}

@misc{tuloup_computing_2026,
	title = {{Computing logical error thresholds with the Pauli Frame Sparse Representation}},
	author = {Thomas Tuloup and Thomas Ayral},
	year = {2026},
	publisher = {arXiv},
	url = {https://arxiv.org/abs/2603.14670},
	doi = {10.48550/arXiv.2603.14670},
	eprint = {2603.14670},
	archivePrefix = {arXiv},
	note = {arXiv:2603.14670 [quant-ph]},
}

@FOOTNOTE{note_frozen_gap_parameter,
    note = {We optimize the frozen gap parameter $\varphi$ via a parameter sweep.},
}

@FOOTNOTE{note_cluster_confidence_metric,
    note = {We employ the LLR 2-norm instance of the metric.},
}

@FOOTNOTE{fault_distance_4,
    note = {In reality, this protocol only has fault distance 4, as was noted in \cite{sahay2026fold}.}
}

@FOOTNOTE{scc_fd5_simulation,
note = {The color code cultivation circuits we simulate are sourced from the Zenodo associated with \cite{gidney_magic_2024}. The $f=3$ surface code cultivation circuits are from the GitHub repository for Ref. \cite{sahay2026fold}. The $f=5$ circuit, however, we reconstruct following \cite{sahay2026fold} and matching the noise conventions of the $f=3$ circuit, as it is not publicly available.
}
}

\clearpage
\onecolumngrid
\begin{center}
    \Large\textbf{End Matter}
\end{center}
\normalsize
\twocolumngrid

\textit{Algorithmic description of Caliper.---} Algorithm~\ref{alg:caliper} summarizes the search.
The distance table stores minimum path weights between graph nodes, including the spatial boundaries of the code. Caliper sums over all evaluated predictions in $\mathcal S$ to estimate $R$. For weights not known exactly, it uses the smallest correction weight found so far.

\begin{algorithm}[H]
    \caption{Caliper}\label{alg:caliper}
    \footnotesize
    \begin{algorithmic}[1]
        \Require Visible syndrome $\sigma_v$, distance table, limits $b,B$, depth and branching limits
        \State Decode $\sigma_v$ with hidden detectors free to obtain $\hat\sigma_{h_0}$ and $w_0(\hat\sigma_{h_0})$.
        \State Fix $\hat\sigma_{h_0}$ and decode the opposing logical class to obtain $w_1(\hat\sigma_{h_0})$.
        \State Initialize $\mathcal S \gets \{\hat\sigma_{h_0}\}$, $\mathcal Q \gets \{\hat\sigma_{h_0}\}$; $n\gets 0$.
        \While{$\mathcal Q\ne\varnothing$ and $n<B$}
            \State Take $\hat\sigma_h$ with lowest estimated $w_1$ from $\mathcal Q$.
            \State $\mathcal N \gets$ hidden syndromes differing from $\hat\sigma_h$ by one hidden detector value.
            \State Bound the matching weights using the distance table, then tighten the bounds by updating known corrections.
            \State Use MWPM as needed to determine the smallest weights.
            \State Store the weight estimates and bounds; $\mathcal S\gets\mathcal S\cup\mathcal N$.
            \State $n\gets n+1$.
            \State Select predictions in $\mathcal N$ whose bounds allow $w_1$ to stay the same or decrease, favoring low estimates of $w_1$ and respecting the depth and branching limits.
            \If{$n\leq b$ and $\mathcal N\ne\varnothing$}
                \State Also select the prediction in $\mathcal N$ with lowest estimated $w_1$.
            \EndIf
            \State Add selected predictions to $\mathcal Q$ only if not previously queued.
        \EndWhile
        \State \Return $R$ from Eq.~\eqref{seq:implemented-score}.
    \end{algorithmic}
\end{algorithm}

In the Supplementary Material, we detail the input parameters employed for the simulations in the main text, along with the effect of longer searches.

\textit{Latency.---} A critical aspect of any mid-circuit postselection metric is the speed at which
it can be computed. Its result directly determines the subsequent computation, so the quantum
computation must idle in the meantime \cite{fowler_time-optimal_2012, gidney_flexible_2019}. By using precomputed graph distances, Caliper achieves a
latency comparable to computing the complementary gap. Caliper is written primarily in \verb|C++|; for $f=3$ surface code cultivation, Caliper takes a median of $25.3\,\mu\mathrm{s}$ per shot, compared with
$15.5\,\mu\mathrm{s}$ for the closed-boundary complementary gap (measured over $10^4$ shots) on one core of
an Apple M4 processor. 

It can be further accelerated since, in practice, Caliper needs only to determine whether the score exceeds the acceptance threshold $R^*$.
Shots with an initial score sufficiently far below $R^*$ can often be discarded before exploring
further hidden syndromes. We set this early-rejection criterion from the observed changes in $R$ during
the search. Figure~\ref{fig:latency-ridgeline} compares the resulting decision latencies with those
of the thresholded closed-boundary gap.

\begin{figure}[t]
    \centering
    \includegraphics[width=0.8\linewidth]{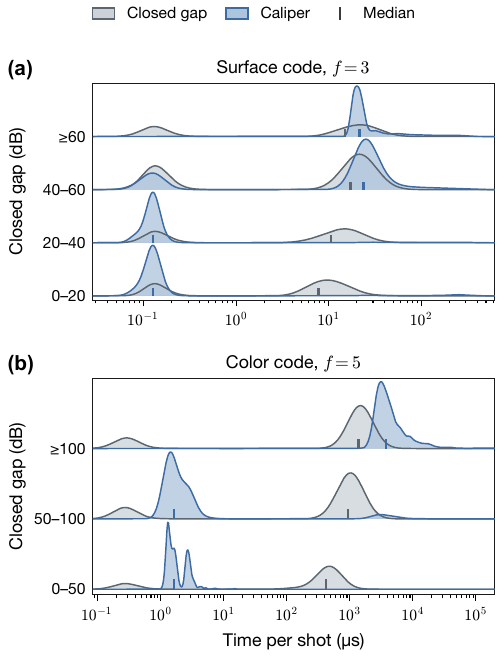}
    \caption{\textbf{Thresholded postselection latency.} Distributions of the time per shot for
    the closed-boundary complementary gap (gray) and Caliper (blue), grouped by the closed-boundary
    gap, for (a) $f=3$ surface code cultivation and (b) $f=5$ color code cultivation. Both methods
    determine whether their score meets an acceptance threshold. Vertical marks indicate the
    median within each group.}
    \label{fig:latency-ridgeline}
\end{figure}

On neutral-atom hardware, where a standard QEC cycle has a timescale on the order of milliseconds \cite{bluvstein_fault-tolerant_2026, computing2026quantumerrorcorrectiontoric}, Caliper applied to the $f=3$ protocols is fast enough to output a decision without additional idling. The larger protocols may require faster execution or a modest number of additional QEC cycles to bridge the reaction time \cite{gidney_flexible_2019, gidney_magic_2024}. On the microsecond timescales of superconducting hardware \cite{acharya_quantum_2025}, however, both the complementary gap—even with a closed boundary—and Caliper may require substantial acceleration, potentially using specialized hardware such as FPGAs \cite{das_scalable_2022, holmes_nisq_2020, liyanage_scalable_2023, wu_micro_2025}. In part, this is because, from our estimates, identifying the most likely class takes tens of microseconds, while the MWPM calculation for the complementary class can be up to $30\times$ slower. Without such acceleration, the spacetime cost of preserving the state while awaiting the postselection decision could dominate the cost of cultivation.

\textit{Calibration.---} An ideal postselection method is well-calibrated, which means that it can
accurately predict the posterior LER that will occur at whatever postselection strength we choose.
Calibration is not strictly necessary for a postselection method to perform well; clustering-based
metrics \cite{lee_efficient_2026}, for example, do not predict the post-selected LER but can still rank shots by how likely to
fail they seem. Nevertheless, calibration provides an illuminating diagnostic: it allows us to test
whether a method’s estimated confidence faithfully reflects the observed distribution and, when it
does not, to empirically investigate why.

In Fig.~\ref{fig:calibration}, we plot a reliability diagram for Caliper, along with closed-boundary
complementary gap. A perfectly calibrated metric would follow the $y=x$ line, where the predicted
and actual probabilities of logical failure are identical. Even in the closed-boundary setting, the
complementary gap appears miscalibrated, consistently underestimating logical error rates. The
reason for this overconfidence is well-understood: the predicted logical error rate is computed from
the complementary gap $G$ as $\hat P_L \approx e^{-G}/(1+e^{-G})$ \cite{gidney_yoked_2023, bombin_fault-tolerant_2024}, but it neglects the entropic factors
that describe the number of degenerate error mechanisms that produce gap $G$, along with higher-gap
mechanisms \cite{dennis_topological_2002, duclos-cianci_fast_2010}. At low but finite $p$, the contribution from higher-gap
mechanisms is negligible, leaving only the number of degenerate gap $G$ mechanisms, $N(G)$. The
closed-boundary calibration curve is well-described by a linear relationship on the log-log plot of
Fig.~\ref{fig:calibration}; a linear fit achieves $R^2 \simeq 0.97$ and slope $1-T$, where $T \simeq 0.18$
describes the effective (dimensionless) temperature of the system. For reference, the perfect calibration line corresponds to zero effective temperature, which is realized in the $p \rightarrow 0$ limit \cite{nishimori_internal_1981, dennis_topological_2002}.  

From the fit, we also obtain an empirical entropy model with prefactor $c_0$:
\begin{equation}
    \hat N(G) \approx c_0 e^{-(1-T)G}/e^{-G} = c_0 e^{TG}.
\end{equation}
This is consistent with our expectation that with a gap of $G$, the domain wall excitation
corresponds to a string of length $L\propto G$, and there are $N_L\propto \exp(L)$ such strings, so
the entropy of the system scales as $\Delta S \propto \log N(G) \propto G$ \cite{dennis_topological_2002}. Precisely in this sense,
as $T \rightarrow 0$, the closed-boundary curve is perfectly calibrated since the complementary
gap's zero-temperature approximation becomes exact.

\begin{figure}[t]
    \centering
    \includegraphics[width=0.8\linewidth]{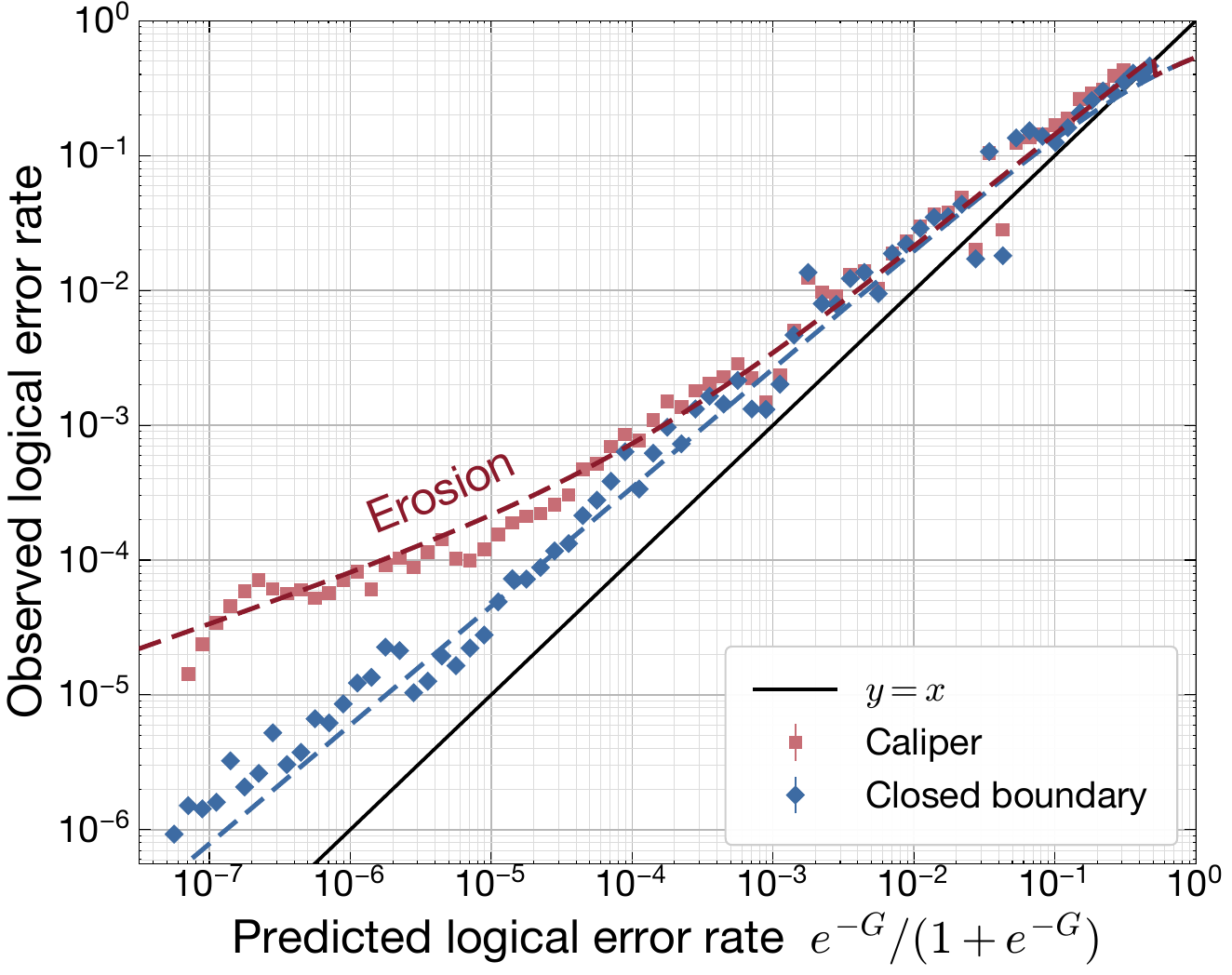}
    \caption{\textbf{Calibrations of Caliper and the complementary gap for $f=3$ surface code
    cultivation.} A power law fit, corresponding to a finite  temperature $T$, is applied to
    the closed boundary curve. Caliper is fit according to the phenomenological law
    \eqref{eq:phenomenological-description}.}
    \label{fig:calibration}
\end{figure}

For the open-boundary setting, Caliper consistently underestimates the logical error rate by an even
wider margin than the complementary gap. Moreover, its calibration curve is manifestly non-linear.
In part, this is because the failure probability conditioned on the open-boundary score $R$ depends
indirectly on $R$. By the tower property,
\begin{align}
    \Pr(\text{Fail} \mid R) & = \E_{G \sim \rho_{G \mid R}} [\Pr(\text{Fail} \mid G, R) \mid R]
\end{align}
This is averaged over, and thus depends on, the distribution of gaps $G$ given $R$, $\rho_{G \mid
R}$. To leading order, the conditional LER depends only on $G$: $\Pr(\text{Fail} \mid G, R) \approx
\Pr(\text{Fail} \mid G)$, where the only $R$-dependence in the conditional LER is through $\rho_{G
\mid R}$, and we numerically verify this behavior (SM). The non-linearity of the calibration curve
is also expected to arise from $\rho_{G \mid R}$.

Thus, we study $\rho_{G \mid R}$ in further empirical detail and summarize here the results; a more
thorough discussion can be found in the Supplementary Material. We observe that for the most part,
the closed-boundary complementary gap $G$ and Caliper score $R$ are close. However, in a select few instances, the
hidden syndrome reveals information that results in $G \ll R$: a phenomenon we call
\textit{erosion}. $\Delta = R - G$ measures the shortfall of the realized gap relative to the open-boundary
score, with larger $\Delta$ leading to larger discrepancies between the LER predicted
by Caliper and reality. It is even possible for a maximally large open-boundary score $R$ to erode to
a near-zero gap, at which point the decoding result is effectively random. That said, erosion is
rare. Outside of $R \sim G$, we find that the probability $R - G = \Delta$ is exponentially
suppressed in $\Delta$, where $\rho(G \mid R) \approx A e^{-\kappa \Delta}$. The eroded fraction of
shots at $R$ is
\begin{equation}
    E_R = \int_{0}^{R} \rho d\Delta = E_\infty (1 - e^{-\kappa R}),
    \label{eq:erosion-fraction}
\end{equation}
where $E_\infty=A/\kappa$ is the asymptotic eroded fraction in this model.
A shot at $R$ either did not erode and fails at a closed-boundary gap $R$ or erodes by $\Delta$ and
fails at a gap $R - \Delta$. Noting that $\rho(\Delta) \rightarrow 0$ exponentially quickly as
$\Delta \rightarrow \infty$, we obtain
\begin{equation}
    \label{eq:phenomenological-description}
    \Pr(\text{Fail} \mid R) \approx [1 - E_R] c_0 e^{-(1-T)R} + B e^{-\kappa R},
\end{equation}
where $T$ is the same temperature as in the closed-boundary setting and $B$ is the amplitude of
the contribution from eroded shots. We determine $\kappa$ from
the negative slope of a weighted linear fit to $\log\rho(\Delta)$ versus $\Delta$, excluding the
peak near $\Delta=0$. For each $\Delta$, we pool shots with $R\geq\Delta$; the fitted amplitude
gives $A$.

\clearpage
\onecolumngrid
\subsection*{\Large Supplementary Material}
\normalsize

\let\addcontentsline\oldaddcontentsline

\setcounter{equation}{0}
\setcounter{figure}{0}
\setcounter{table}{0}

\renewcommand{\theequation}{S\arabic{equation}}
\renewcommand{\thefigure}{S\arabic{figure}}
\renewcommand{\thetable}{S\arabic{table}}
\renewcommand{\refname}{}

\tableofcontents

\newpage

\section{Additional benchmarking}
\begin{figure}[!h]
    \centering
    \includegraphics[width=\linewidth]{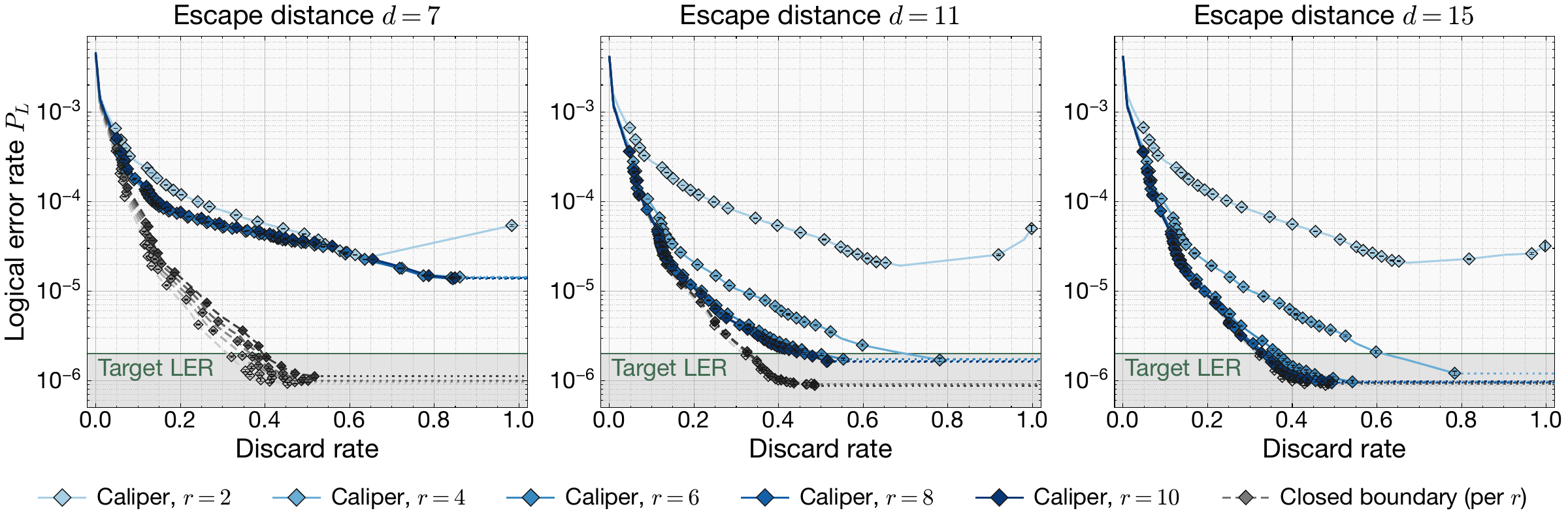}
    \caption{\textbf{Logical error rate versus discard rate for larger escape stages.} The curves
    for Caliper (blue) and closed-boundary postselection (gray) are depicted for escape code
    distances $d=7, 11, 15$ and idling round counts $r = 2, 4, \dots ,10$. }
    \label{fig:escape-sweep}
\end{figure}

In this section, we supply additional results from our numerical study of open-boundary
postselection, elaborating on the results detailed in the main text. Figure~\ref{fig:escape-sweep} shows how the logical error rate varies with the acceptance fraction for open-boundary postselection across modifications to the escape stage for $f=3$ surface code cultivation. Fig.~\ref{fig:modified-escape} is obtained by evaluating the logical error rate at the acceptance fraction corresponding to where closed-boundary postselection reaches the target logical error rate. As shown, the effects in Fig.~\ref{fig:modified-escape} due to insufficiently large escape stages persist at other acceptance fractions as well.

\begin{figure}[ht!]
    \centering
    \includegraphics[width=0.95\linewidth]{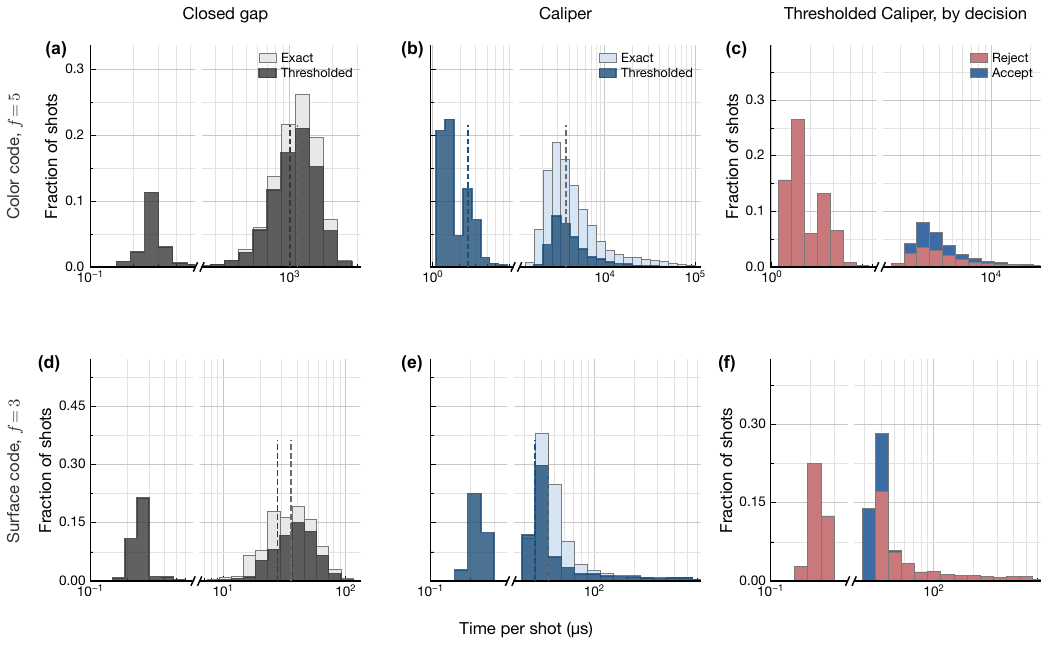}
    \caption{\textbf{Postselection decision latency.} (a) Frequency distribution of latencies for
    computing the closed boundary complementary gap in $f=5$ color code cultivation. Exact
    corresponds to evaluating this gap completely, whereas thresholded requires computing whether
    the gap is less than a threshold value $G^*$, calibrated here to the acceptance fraction that
    reaches the target LER of $10^{-9}$. (b) The corresponding plot for Caliper, where the
    score threshold $R^*$ is determined by the same acceptance fraction. (c) Breakdown of the thresholded
    curve by the decision to accept or discard the shot. (d)-(f) Extension of (a)-(c) to $f=3$ surface
    code cultivation, with the gap threshold determined by the target LER of $2 \times 10^{-6}$. }
    \label{fig:decision-latency}
\end{figure}
Fig.~\ref{fig:decision-latency} showcases the decoder latency for Caliper in further detail, evaluated in the same circumstances as in Fig.~\ref{fig:latency-ridgeline}. Without thresholding, Caliper is much slower than the closed-boundary complementary gap. This is because Caliper might spend considerable time searching the hidden syndrome space to estimate the open-boundary score more accurately, despite already having adequate information to conclude that a given shot should be discarded. The complementary gap is not accelerated to the same extent since it lacks Caliper's iterative, adaptive search structure. That being said, it may be possible to apply principles from thresholding to speed up estimates of gap-like quantities in non-matchable decoding settings \cite{xie_simple_2026, wills_forced_2026, landahl2011faulttolerantquantumcomputingcolor, feldman2005using}.

Next, we complement our discussion of erosion in the End Matter with additional numerics. There, we claimed that, to good
approximation, $P(\text{Fail} \mid G, R) \approx P(\text{Fail} \mid G)$. This is supported by
Fig.~\ref{fig:gap-contributions}, where we find that $P(\text{Fail} \mid G, R)$ is approximately
constant at fixed $G$ over a range of $R$. A higher-order effect arises in the neighborhood of $R =
G$, where we find that $P(\text{Fail} \mid G, R)$ decreases slightly; we defer studying its cause to
future work.

\begin{figure}[ht!]
    \centering
    \includegraphics[width=0.9\linewidth]{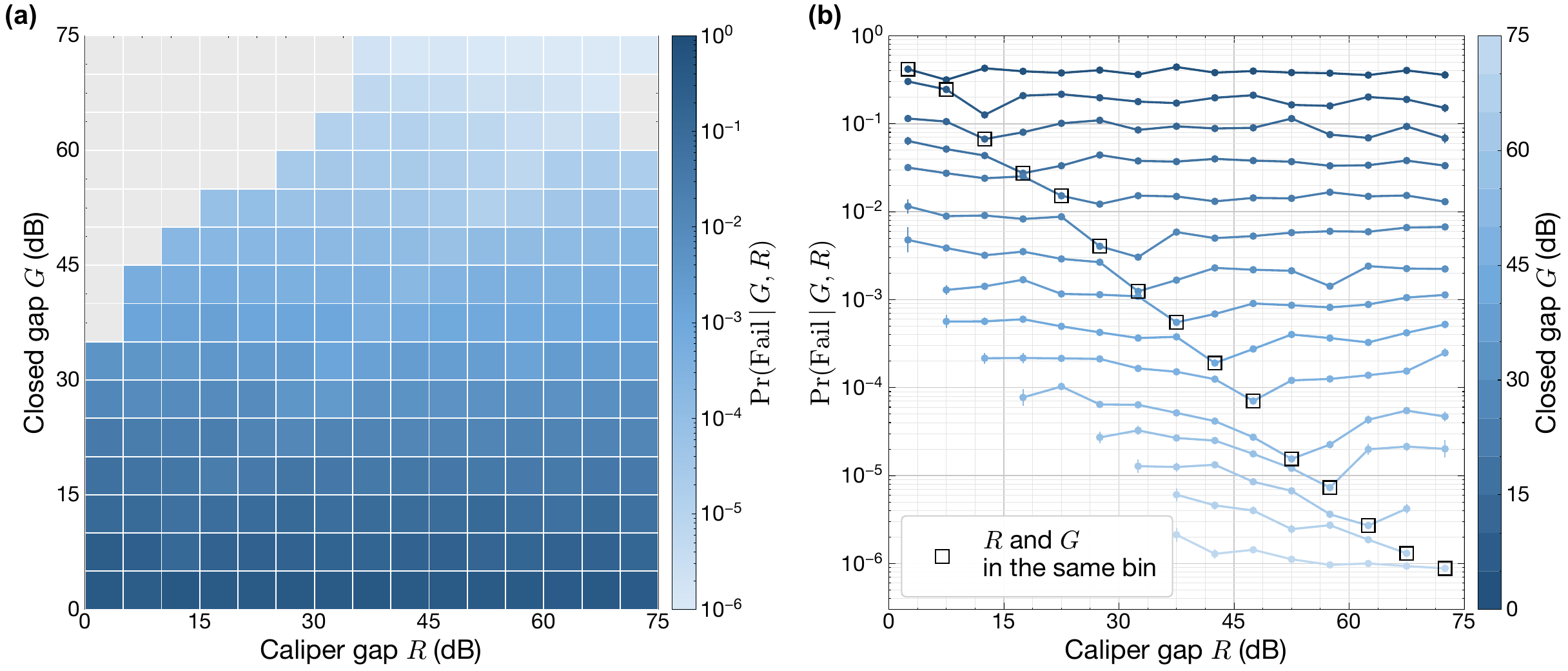}
    \caption{\textbf{Dependence of logical error rate on $R$ and $G$.} Data for panels correspond to
    $f=3$ surface code cultivation. (a) Logical error rate versus both the Caliper score $R$
    and the eventual closed boundary gap $G$. Gray squares correspond to $(R, G)$ pairs with fewer
    than 20 failures. (b) Sensitivity of the joint LER $P(\text{Fail} \mid G, R)$ to $R$ alone,
    binned by the eventual closed-boundary gap. Boxed points correspond to instances where $R$ and
    $G$ are in the same 5 dB bin.}
    \label{fig:gap-contributions}
\end{figure}
\begin{figure}[ht!]
    \centering
    \includegraphics[width=\linewidth]{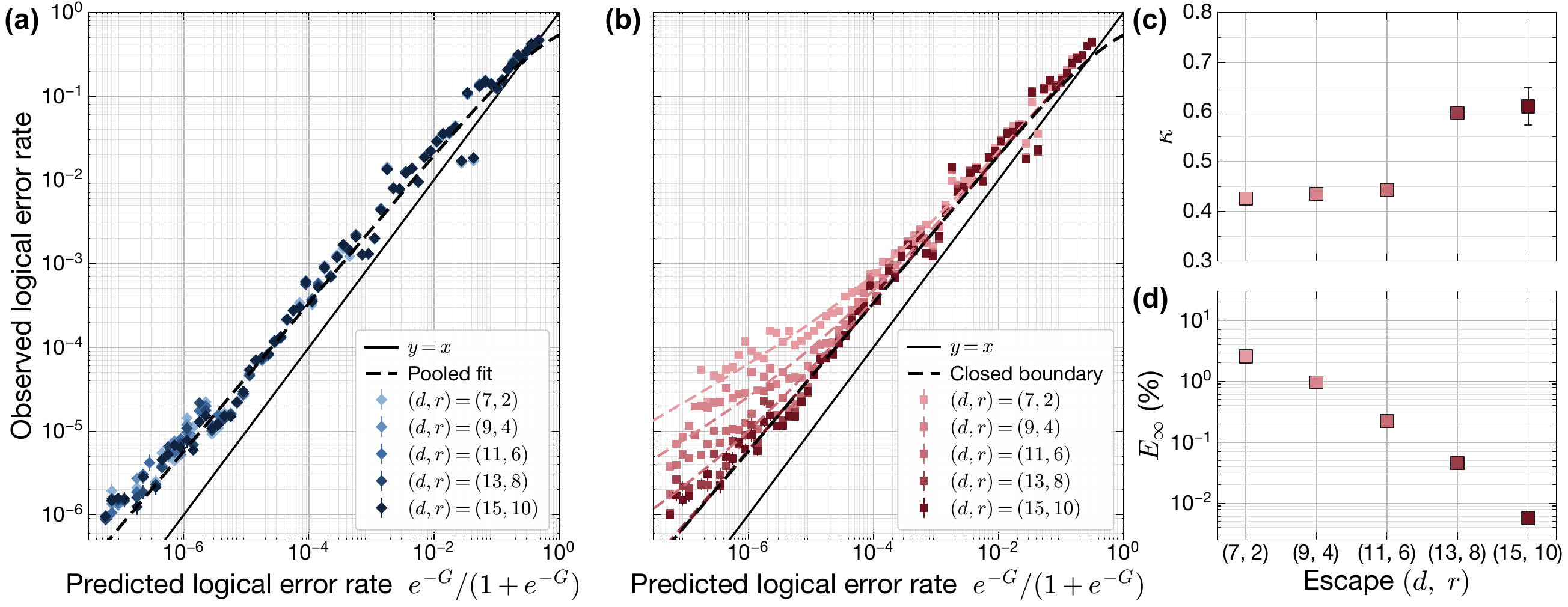}
    \caption{\textbf{Effect of modified escape stages on calibration.} Calibrations of the
    closed-boundary gap (a) and Caliper (b) as the number of idling rounds $r$ and distance of the
    escape code $d$ are increased. The linear fit shown in (a) is calculated by pooling the data
    across the different escape stages. The same line is illustrated in (b), along with the fit to
    the phenomenological description of erosion \eqref{eq:phenomenological-description} for each
    open-boundary curve. The decay rate $\kappa$ and asymptotic eroded fraction $E_\infty$ in
    Eq.~\eqref{eq:erosion-fraction} are extracted from fits to the erosion density for each escape
    stage independently and plotted against the escape stage size in (c) and (d).}
    \label{fig:modified-calibration}
\end{figure}

This effect in part is explained by Fig.~\ref{fig:modified-calibration}, where we find that
modifying the escape stage appears to have a negligible impact on the calibration curve for
closed-boundary postselection but considerably alters the phenomenology for open-boundary
postselection. The fact that closed-boundary postselection remains barely affected is consistent
with Ref. \cite{sahay2026fold}. There, after $r=3$ rounds of syndrome extraction in the $d=7$
surface code, the authors unitarily grow to a $d=13$ surface code and append a variable number of
idling rounds at that distance prior to readout, and they find that these additional idling rounds
have virtually no impact on the performance of cultivation. Our results indicate that this behavior
is not necessarily an accurate proxy for how altering the escape stage affects open-boundary
postselection. Fig.~\ref{fig:modified-calibration}(b) shows that for Caliper, the different escape
stages perform comparably at high logical error rates but they separate at low logical error rates.
They each follow the functional form of erosion described by
Eq.~\eqref{eq:phenomenological-description}, but with different parameters. In
Figs.~\ref{fig:modified-calibration}(c) and (d), we observe that $\kappa$, which describes how
quickly the effect of erosion drops off, increases as the escape stage grows in volume, and
$E_\infty$, the total probability of erosion, decreases. This model suggests that increasing the
size of the escape stage serves to suppress erosion until the bottleneck becomes the attainable
logical error rate even with the closed-boundary complementary gap.

Figure~\ref{fig:correlated-matching} compares vanilla and correlated matching for
$f=3$ surface code cultivation. Correlated matching lowers the logical error rate at fixed
discard rate for both the closed-boundary complementary gap and Caliper.
\begin{figure}[tb!]
    \centering
    \includegraphics[width=0.6\linewidth]{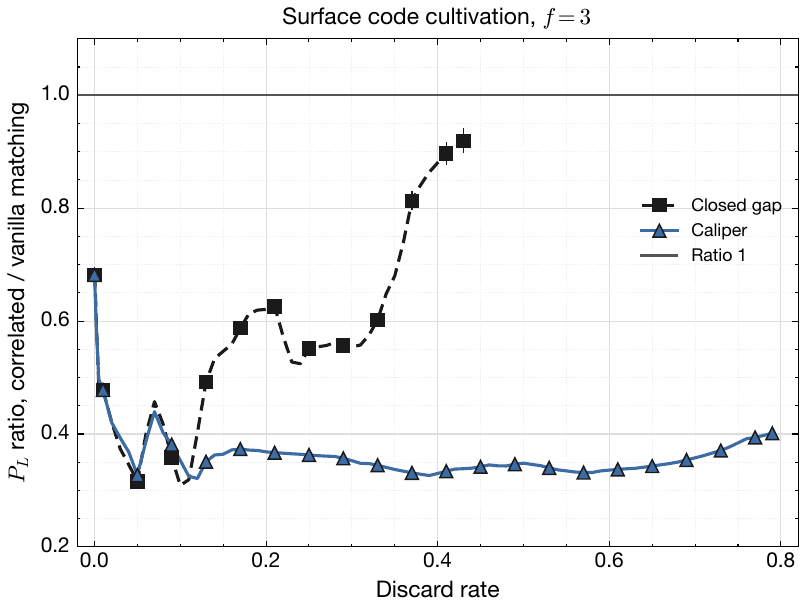}
    \caption{\textbf{Correlated matching.} Ratio of logical error rates with correlated and
    vanilla matching at fixed discard rate for the closed-boundary complementary gap and
    Caliper in $f=3$ surface code cultivation.}
    \label{fig:correlated-matching}
\end{figure}
Finally, Fig.~\ref{fig:search-depth-sweep} shows the effect of increasing Caliper's search depth.
The reductions in logical error rate are modest compared with the increase in evaluation time.

\begin{figure}[tb!]
    \centering
    \includegraphics[width=\linewidth]{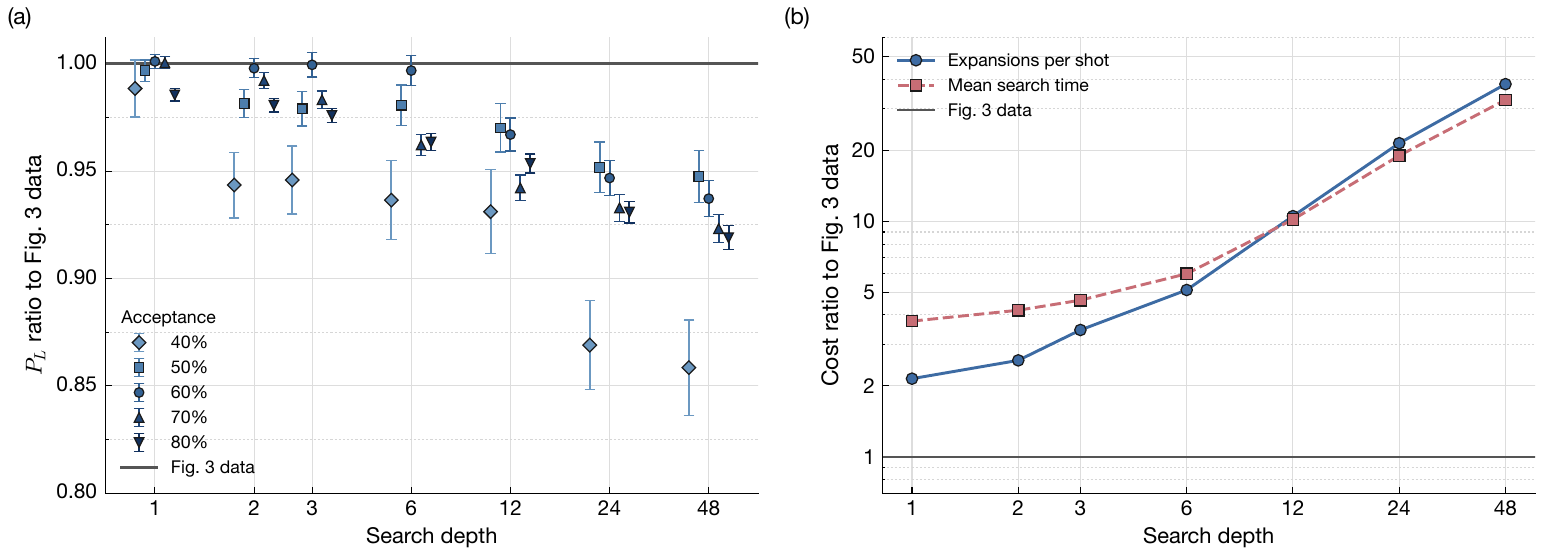}
    \caption{\textbf{Caliper search depth.} (a) Logical error rate at fixed acceptance,
    relative to the corresponding data in Fig.~\ref{fig:comparison}, as the search depth increases.
    (b) Mean number of expansions per shot and mean search time, normalized to the same reference.}
    \label{fig:search-depth-sweep}
\end{figure}

\FloatBarrier
\section{Guarantees from Caliper}
\label{sec:caliper}

In this section, we establish the properties of matching that allow Caliper to efficiently search
over possible hidden syndromes. Throughout, we work in the most-likely-error approximation used in
the main text. Exact maximum-likelihood decoding is $\#\mathrm{P}$-hard in general
\cite{iyer_hardness_2015}. As an efficient proxy, matching retains only the most likely error
configuration in each logical class.

\subsection{Matchable detector error models}

\begin{definition}[Detector error model]
    Consider a circuit subject to a set $\mathcal E$ of independent elementary error mechanisms.
    Each mechanism $\varepsilon\in\mathcal E$ occurs with probability $p_\varepsilon$ and flips a
    set of detectors $\partial\varepsilon$. It may also flip the logical observable, in which case
    we call $\varepsilon$ \emph{marked}. The collection of error mechanisms together with their
    detector syndromes, logical outcomes, and probabilities defines the detector error model (DEM)
    of the circuit.
\end{definition}

For an error configuration $E\subseteq\mathcal E$, the observed syndrome is
\begin{equation}
    \partial E
    =
    \bigoplus_{\varepsilon\in E}\partial\varepsilon ,
\end{equation}
and its logical outcome is the parity of the marked mechanisms in $E$. Since the mechanisms occur
independently,
\begin{equation}
    P(E)
    =
    \prod_{\varepsilon\in E}p_\varepsilon
    \prod_{\varepsilon\notin E}(1-p_\varepsilon)
    =
    \kappa e^{-w(E)},
\end{equation}
where $\kappa$ is independent of $E$ and
\begin{equation}
    w(E)
    =
    \sum_{\varepsilon\in E}\omega_\varepsilon,
    \qquad
    \omega_\varepsilon
    =
    \log\!\left(\frac{1-p_\varepsilon}{p_\varepsilon}\right).
    \label{seq:dem-weight}
\end{equation}
As in the main text, weights and scores use natural logarithms. Where figures quote values in dB,
the conversion is $10/\ln 10$ dB per natural-log unit. We assume the weights are positive.

\begin{definition}[Matchable DEM]
    A detector error model is \emph{matchable} if every elementary error mechanism flips at most two
    detectors,
    \begin{equation}
        |\partial\varepsilon|\leq 2
        \qquad
        \text{for all }\varepsilon\in\mathcal E .
    \end{equation}
\end{definition}

A matchable DEM has a weighted graph representation. Each detector is represented by a vertex. A
mechanism that flips two detectors $u$ and $v$ gives an edge between $u$ and $v$, while a mechanism
that flips a single detector is connected to a virtual boundary vertex. Boundary vertices carry no
detector constraint. Each edge has weight $\omega_\varepsilon$, and an edge is marked exactly when
the corresponding mechanism flips the logical observable. Parallel edges may be retained.

For a detector syndrome $S$, a \emph{correction} is an edge set $M$ whose odd-degree detector
vertices are precisely $S$; no parity constraint is imposed on virtual boundary vertices. Its weight
is
\begin{equation}
    \|M\|
    =
    \sum_{\varepsilon\in M}\omega_\varepsilon .
\end{equation}
Let $W_0(S)$ and $W_1(S)$ be the minimum weights of corrections with even and odd marked-edge
parity, respectively. We write
\begin{equation}
    w_0(S)=\min\{W_0(S),W_1(S)\},
    \qquad
    w_1(S)=\max\{W_0(S),W_1(S)\}.
\end{equation}
The complementary gap is therefore
\begin{equation}
    G(S)=w_1(S)-w_0(S).
\end{equation}

Because all edge weights are positive, a minimum-weight correction contains no unnecessary closed
loops. It may therefore be written as a collection of paths pairing syndrome defects either with one
another or with a boundary. Replacing every such path by a shortest path cannot increase the weight.
This gives the usual reduction to minimum-weight perfect matching.

It will be useful below to keep track of the logical parity of a path. Let
\begin{equation}
    d_b(u,v),\qquad b\in\{0,1\},
\end{equation}
denote the minimum weight of a path from $u$ to $v$ containing marked edges of parity $b$. These
distances depend only on the DEM and can therefore be precomputed. They can be obtained by running
shortest-path search on the graph in which each vertex is doubled according to the accumulated
marked-edge parity. When the logical parity is irrelevant, we write
\begin{equation}
    d(u,v)=\min\{d_0(u,v),d_1(u,v)\}.
\end{equation}

For a decoding graph with $V$ vertices and $E$ edges, the full distance table can be constructed by repeated Dijkstra searches in $O(V(V + E) \log V)$ time and stored in $O(V^2)$ memory. Tracking logical parity changes these costs only by constant factors. The table is computed once for each circuit and noise model and is reused across shots. 

\subsection{Changing a syndrome}

We now establish the basic observation used by Caliper. Suppose a syndrome $S$ is changed to
$S\oplus T$, where $T$ is a set of detector flips. Let $M$ be a minimum-weight correction for $S$.
For any edge set $J$, define its net cost relative to $M$ by
\begin{equation}
    c_M(J)
    =
    \sum_{\varepsilon\in J\setminus M}\omega_\varepsilon
    -
    \sum_{\varepsilon\in J\cap M}\omega_\varepsilon .
    \label{seq:net-cost}
\end{equation}
This is exactly the change in weight produced by replacing $M$ by $M\oplus J$.

\begin{lemma}[Exact change under a syndrome modification]
    \label{slem:increment}
    Let $M_\ell$ be a minimum-weight correction of syndrome $S$ in logical class $\ell$. Then
    \begin{equation}
        W_{\ell\oplus b}(S\oplus T)-W_\ell(S)
        =
        \min_J c_{M_\ell}(J),
        \label{seq:increment}
    \end{equation}
    where the minimum is over edge sets $J$ whose odd-degree detector vertices are exactly $T$ and
    whose marked-edge parity is $b$.
\end{lemma}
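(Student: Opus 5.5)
The plan is to prove the identity by establishing a bijection between edge sets $J$ with the stated constraints and corrections $M'$ of $S\oplus T$ in logical class $\ell\oplus b$, via $M' = M_\ell\oplus J$. We then read off weights using the net-cost definition \eqref{seq:net-cost}.

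\emph{Step 1: the bijection.} The boundary operator (odd-degree detector vertices, boundary vertices ignored) is linear over $\mathbb F_2$ on edge sets, i.e.\ $\partial(A\oplus B)=\partial A\oplus\partial B$. Marked-edge parity is likewise additive mod 2. Hence, if $J$ has boundary $T$ and parity $b$, then $M_\ell\oplus J$ has boundary $S\oplus T$ and parity $\ell\oplus b$. Conversely, any correction $M'$ of $S\oplus T$ with parity $\ell\oplus b$ arises this way from $J=M_\ell\oplus M'$. That $J$ has boundary $T$ and parity $b$ follows by the same linearity, and the map $J\mapsto M_\ell\oplus J$ is an involution-type bijection. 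One technicality needs care: parallel edges must be treated as distinct elements of $\mathcal E$, so that $\oplus$ is the symmetric difference of mechanism sets.

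\emph{Step 2: the weight identity.} For any edge set $J$, we have $\|M_\ell\oplus J\| = \|M_\ell\| + \sum_{J\setminus M_\ell}\omega - \sum_{J\cap M_\ell}\omega = \|M_\ell\| + c_{M_\ell}(J)$. This holds because edges of $J$ outside $M_\ell$ are added and edges inside are removed. This is just the remark already made after \eqref{seq:net-cost}.

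\emph{Step 3: minimizing.} Minimizing $\|M'\|$ over corrections of $S\oplus T$ in class $\ell\oplus b$ gives $W_{\ell\oplus b}(S\oplus T)$. By the bijection, this minimum equals $\|M_\ell\| + \min_J c_{M_\ell}(J)$, and $\|M_\ell\| = W_\ell(S)$ by the choice of $M_\ell$. The minima exist since the edge set is finite. If no $J$ with the required boundary and parity exists, both sides are $+\infty$ under the usual convention, and the bijection shows the two feasibility conditions coincide.

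There is no real obstacle; the only point requiring attention is Step 1's bookkeeping. We must check that boundary vertices impose no parity constraint on either side, so the symmetric-difference argument is unaffected by paths terminating on the virtual boundary. We must also check that the logical class label composes additively ($\ell\oplus b$) regardless of whether $\ell$ is the minimizing class. Notably, minimality of $M_\ell$ is used only to identify $\|M_\ell\|$ with $W_\ell(S)$, not in the bijection itself.
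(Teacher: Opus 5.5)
Your proposal is correct and matches the paper's argument: the paper gets the upper bound by noting that $M_\ell\oplus J$ is a correction of $S\oplus T$ in class $\ell\oplus b$ with weight $W_\ell(S)+c_{M_\ell}(J)$, and attainment by taking $J=M_\ell\oplus M'$ for an optimal $M'$; these are the two directions of your bijection $J\mapsto M_\ell\oplus J$, combined with the same net-cost identity. Your extra bookkeeping (parallel edges as distinct mechanisms, no parity constraint at boundary vertices, the infeasible case) is sound and slightly more careful than the paper's version.
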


\begin{proof}
    For every allowed $J$, the edge set $M_\ell\oplus J$ has syndrome $S\oplus T$ and logical class
    $\ell\oplus b$. Hence
    \begin{equation}
        W_{\ell\oplus b}(S\oplus T)
        \leq
        W_\ell(S)+c_{M_\ell}(J).
    \end{equation}
    Minimizing over $J$ gives one inequality.

    Conversely, let $M'$ be a minimum-weight correction of $S\oplus T$ in class $\ell\oplus b$. Then
    \begin{equation}
        J=M_\ell\oplus M'
    \end{equation}
    has detector boundary $T$, marked-edge parity $b$, and
    \begin{equation}
        c_{M_\ell}(J)
        =
        \|M'\|-\|M_\ell\|
        =
        W_{\ell\oplus b}(S\oplus T)-W_\ell(S).
    \end{equation}
    Thus the bound is attained.
\end{proof}

The important special case is the relocation of a syndrome defect from $u$ to $v$, for which
$T=\{u,v\}$.

\begin{corollary}[Distance bound]
    \label{slem:distance-bound}
    For any $u,v$ and $b\in\{0,1\}$,
    \begin{equation}
        \left|
        W_{\ell\oplus b}(S\oplus\{u,v\})
        -
        W_\ell(S)
        \right|
        \leq
        d_b(u,v).
        \label{seq:distance-bound}
    \end{equation}
\end{corollary}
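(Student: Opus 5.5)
The plan is to obtain both directions of the absolute value from Lemma~\ref{slem:increment}, applied once in each direction, with the test edge set $J$ taken to be a shortest path of the appropriate logical parity. Write $T=\{u,v\}$ and let $P$ be a path from $u$ to $v$ whose marked-edge parity is $b$ and whose weight is $d_b(u,v)$. If no such path exists, $d_b(u,v)=\infty$ and there is nothing to prove.

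First I reduce $P$ to an admissible $J$. A shortest path might in principle traverse an edge more than once, so let $J_P$ be the mod-2 reduction of its edge multiset: $J_P$ keeps exactly those edges used an odd number of times. Three facts need checking.
\begin{itemize}
\item Vertex degrees mod $2$ are unchanged by the reduction, so the odd-degree detector vertices of $J_P$ are exactly $\{u,v\}\cap(\text{detectors})=T$. If $u$ or $v$ is a virtual boundary vertex, it is simply not counted, consistent with the syndrome change $S\oplus T$ acting only on detectors. If $u=v$, then $T=\varnothing$.
\item The marked-edge parity of $J_P$ equals that of $P$, namely $b$.
\item Since all weights are positive, $\sum_{\varepsilon\in J_P}\omega_\varepsilon\le \|P\|=d_b(u,v)$.
\end{itemize}
Thus $J_P$ is admissible in the minimum of Eq.~\eqref{seq:increment}.

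Upper bound. Let $M_\ell$ be a minimum-weight correction of $S$ in class $\ell$. From the definition in Eq.~\eqref{seq:net-cost} and positivity of the weights, $c_{M_\ell}(J)\le\sum_{\varepsilon\in J}\omega_\varepsilon$ for every $J$, because the subtracted term is nonnegative. Lemma~\ref{slem:increment} then gives
\begin{equation}
W_{\ell\oplus b}(S\oplus T)-W_\ell(S)\le c_{M_\ell}(J_P)\le d_b(u,v).
\end{equation}

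Lower bound. Apply Lemma~\ref{slem:increment} in reverse: start from the syndrome $S'=S\oplus T$ in class $\ell'=\ell\oplus b$, with minimum-weight correction $M'$, and flip the same $T$ with the same parity $b$. Since $S'\oplus T=S$ and $\ell'\oplus b=\ell$, the same estimate gives
\begin{equation}
W_\ell(S)-W_{\ell\oplus b}(S\oplus T)\le c_{M'}(J_P)\le d_b(u,v).
\end{equation}
Combining the two inequalities yields Eq.~\eqref{seq:distance-bound}.

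The only delicate step is the bookkeeping in the reduction from the path $P$ to the edge set $J_P$: preserving the detector boundary and parity under mod-2 cancellation, and handling virtual boundary endpoints. Everything else is immediate from Lemma~\ref{slem:increment} and positivity of the weights.
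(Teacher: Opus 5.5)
Your proof is correct and follows the paper's argument. You toggle a shortest parity-$b$ $u$--$v$ path to get $W_{\ell\oplus b}(S\oplus\{u,v\})-W_\ell(S)\le d_b(u,v)$, phrased via the easy direction of Lemma~\ref{slem:increment} together with $c_M(J)\le\|J\|$, and then apply the same path in reverse for the opposite inequality. Your mod-2 reduction of the path adds a detail the paper glosses over: a shortest parity-constrained walk found in the doubled graph can reuse edges.
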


\begin{proof}
    Let $J$ be a minimum-weight $u$--$v$ path of marked-edge parity $b$. Toggling $J$ transforms any
    correction for $(S,\ell)$ into a correction for $(S\oplus\{u,v\},\ell\oplus b)$ and changes its
    weight by at most $\|J\|=d_b(u,v)$. Hence
    \begin{equation}
        W_{\ell\oplus b}(S\oplus\{u,v\})
        -
        W_\ell(S)
        \leq d_b(u,v).
    \end{equation}
    Applying the same path in the reverse direction gives the opposite inequality.
\end{proof}

Thus the effect of relocating a defect can always be bounded using a table lookup. The exact change
depends on how the path intersects the current correction, through Eq.~\eqref{seq:net-cost}, but its
magnitude is bounded independently of the shot.

Caliper frequently does better than this bound by explicitly considering the possible local
modifications of the current correction.

\begin{lemma}[Certificate for an exact local update]
    \label{slem:cert}
    Let $\mathcal J_{\rm loc}$ be a set of candidate modifications with the required syndrome and
    logical parity, and define
    \begin{equation}
        c_{\rm loc}
        =
        \min_{J\in\mathcal J_{\rm loc}} c_M(J).
    \end{equation}
    Suppose $\lambda$ is a lower bound on $c_M(J)$ for every allowed modification $J\notin\mathcal
    J_{\rm loc}$. If
    \begin{equation}
        c_{\rm loc}\leq\lambda,
    \end{equation}
    then
    \begin{equation}
        W_{\ell\oplus b}(S\oplus T)-W_\ell(S)
        =
        c_{\rm loc}.
    \end{equation}
\end{lemma}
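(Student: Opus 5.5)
The plan is to reduce the statement to Lemma~\ref{slem:increment}. That lemma already identifies the exact increment $W_{\ell\oplus b}(S\oplus T)-W_\ell(S)$ with $\min_J c_M(J)$, where $J$ ranges over all edge sets with odd-degree detector vertices exactly $T$ and marked-edge parity $b$. Here $M=M_\ell$ is the minimum-weight correction in class $\ell$ used to define $c_M$. The task is therefore purely combinatorial: show that the global minimum of $c_M$ over this admissible family equals $c_{\rm loc}$, given the certificate condition.

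First, I will split the admissible family $\mathcal J$ into two disjoint pieces, $\mathcal J_{\rm loc}$ and $\mathcal J\setminus\mathcal J_{\rm loc}$. By hypothesis $\mathcal J_{\rm loc}\subseteq\mathcal J$, since its members have the required syndrome and parity. This gives
\begin{equation}
\min_{J\in\mathcal J}c_M(J)=\min\Bigl\{c_{\rm loc},\ \inf_{J\in\mathcal J\setminus\mathcal J_{\rm loc}}c_M(J)\Bigr\}.
\end{equation}
Second, the hypothesis gives $c_M(J)\ge\lambda$ for every $J\notin\mathcal J_{\rm loc}$, so the infimum over the complement is at least $\lambda$. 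Since $c_{\rm loc}\le\lambda$, the outer minimum equals $c_{\rm loc}$. Substituting into Eq.~\eqref{seq:increment} then yields the claim.

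I expect only minor bookkeeping, not a real obstacle. The first point is that $c_{\rm loc}$ is actually attained. The second is that the minimum in Lemma~\ref{slem:increment} exists, which holds because the graph is finite and so only finitely many edge sets occur. If $\mathcal J_{\rm loc}$ were infinite (for instance with parallel-edge multiplicities), I would argue instead that $c_M(J)$ depends only on $J$ as a set of edges, so there are finitely many candidates. I would also remark that ties, where $c_{\rm loc}=\lambda$, are harmless: the local candidate achieves a value no larger than any outside candidate, and equality is permitted.
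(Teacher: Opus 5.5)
Your proposal is correct and follows the paper's proof. Both invoke Lemma~\ref{slem:increment} to identify the increment with the global minimum of $c_{M_\ell}$ over admissible modifications, then split that family into $\mathcal J_{\rm loc}$ and its complement, where the outside bound $\lambda\geq c_{\rm loc}$ forces the minimum to equal $c_{\rm loc}$. Your extra bookkeeping on attainment and ties is fine; the aside about parallel edges is unnecessary, since a finite multigraph still has only finitely many edge sets.
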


\begin{proof}
    By Lemma~\ref{slem:increment}, the exact change is the minimum net cost over all allowed
    modifications. Every modification outside $\mathcal J_{\rm loc}$ has cost at least $\lambda$,
    while a modification inside $\mathcal J_{\rm loc}$ has cost $c_{\rm loc}\leq\lambda$. Hence the
    global minimum lies in $\mathcal J_{\rm loc}$.
\end{proof}

\begin{lemma}[Update after flipping one detector]
    \label{slem:chain}
    For the decoding graphs used here, let $M$ be a minimum-weight correction in one logical
    class, written as paths pairing defects or connecting them to a boundary. After flipping
    a hidden detector $u$, a minimum-weight
    correction in the same class can be found by reconnecting these pairs along one path
    starting at $u$ and ending at a boundary. The change in weight is the total weight of the
    new connections minus that of the removed ones.
\end{lemma}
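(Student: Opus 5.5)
The plan is to reduce the statement to Lemma~\ref{slem:increment} with $T=\{u\}$ and $b=0$, and then show that the optimal modification $J$ can always be taken to be a single trail from $u$ to the boundary. Concretely, I would identify all virtual boundary vertices with one boundary node $\beta$. This is harmless because boundary vertices carry no parity constraint, so it changes neither the set of corrections nor their weights. By Lemma~\ref{slem:increment},
\begin{equation}
    W_\ell(S\oplus\{u\})-W_\ell(S)=\min_J c_M(J),
\end{equation}
where $J$ ranges over edge sets whose odd-degree detector vertices are exactly $\{u\}$ and whose marked-edge parity is even. The task is therefore to characterize a minimizing $J$.

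The first step is a decomposition. In the contracted graph, a $J$ of this type has odd degree at $u$ and, by the handshake lemma, at $\beta$, and even degree everywhere else. A standard Eulerian argument then splits $J$ into edge-disjoint pieces: one trail $P$ from $u$ to $\beta$, plus a family of closed cycles $C_1,\dots,C_k$. Any cycle passing through $\beta$ can be spliced into $P$ at its endpoint $\beta$, so I may assume that no remaining $C_i$ touches $\beta$. Because the pieces are edge-disjoint, the net cost in Eq.~\eqref{seq:net-cost} is additive: $c_M(J)=c_M(P)+\sum_i c_M(C_i)$.

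The second step, which I expect to be the main obstacle, is discarding the cycles. This is where the hypothesis ``for the decoding graphs used here'' enters. I would invoke the structural property of these planar and matchable DEMs that every logical representative must terminate on a boundary. In graph terms, every closed cycle avoiding $\beta$ has even marked parity; one can check this directly by verifying that the observable's marked edges form a cut separating boundary pieces. Given this, each $C_i$ has even parity, so $M\oplus C_i$ is a correction of the same syndrome $S$ in the same class $\ell$. Minimality of $M$ then gives $c_M(C_i)\ge 0$. It follows that $P$ alone has the correct syndrome, has the same parity as $J$ (namely even), and satisfies $c_M(P)\le c_M(J)$. Hence the minimum is attained on a single trail from $u$ to the boundary. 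If the graph property fails for some odd cycle, this is exactly where the lemma would break, so I would state it explicitly as the assumption.

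Finally, I would interpret $M\oplus P$ as the claimed reconnection. Walking along $P$ from $u$, edges already in $M$ are removed and edges not in $M$ are added. In terms of the path decomposition of $M$, this cuts existing defect--defect or defect--boundary paths and rejoins their pieces along $P$, ending at the boundary. The resulting weight change is $c_M(P)$, which by Eq.~\eqref{seq:net-cost} equals the weight of the added connections minus the weight of the removed ones. A subtlety is that $P$ is a trail rather than a simple path. To handle it, I would shortcut any repeated vertex: the excised loop is a cycle avoiding $\beta$, hence even, so the same argument shows it has nonnegative net cost. This yields a simple path, as the statement asserts.
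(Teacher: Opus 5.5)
Up to the trail $P$, your argument is correct and is essentially the paper's. The paper takes $J=M\oplus M'$ for an optimal post-flip correction $M'$ and splits it into a piece running from $u$ to the boundary plus closed pieces. It then discards the closed pieces by an exchange argument, showing each has net cost exactly zero from the optimality of both $M$ and $M'$. You instead use the one-sided bound $c_M(C_i)\ge 0$ together with the minimality of $J$ in Lemma~\ref{slem:increment}, which does the same job. Contracting the boundary to $\beta$ and splicing every $\beta$-cycle into $P$ is more careful than the paper. The paper asserts that leftover boundary-to-boundary paths preserve the logical class, which fails for a string joining opposite boundaries unless the decomposition is chosen suitably. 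You only need the explicitly stated property that cycles avoiding $\beta$ are even.

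Your last paragraph, however, does not work. When you shortcut a repeated vertex, the excised closed sub-trail need not avoid $\beta$; your own splicing step is what makes $P$ revisit $\beta$. A loop through $\beta$ can have odd marked parity, so deleting it changes the class. In fact the simple-path conclusion is false in the contracted graph. Take the repetition-code spacetime graph (phenomenological noise, uniform weights, distance at least $3$), with marked edges those joining the left column to the boundary. Let $S=\varnothing$ and let $\ell$ be the odd class, so $M$ is a straight logical string at some time $t_0$. Flip a left-column detector $u$ at a time $t_1\neq t_0$, joined to $\beta$ by a marked edge $e$. The unique minimum-weight odd-class correction of $\{u\}$ is $\{e\}$, so the unique optimal modification is $J=M\cup\{e\}$. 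This $J$ passes through $\beta$ twice, and the loop you would excise at $\beta$ is $M$ itself, which is odd. Any even simple $u$--$\beta$ path $Q$ gives $M\oplus Q\neq\{e\}$ and hence strictly larger weight. With separate left and right boundary nodes, the same $J$ is the simple path $u\to L\to\cdots\to R$, so whether ``path'' can mean simple path depends on the boundary convention.

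The fix is to delete the shortcutting step. The statement only asks for one path from $u$ ending at a boundary, and your trail $P$ already is that. If you want a sharper form, one can show that $P$ may be taken to be a simple path to $\beta$, possibly followed by a single odd loop through $\beta$ that is internally disjoint from it.
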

\begin{proof}
    Compare the pairs in $M$ with those in a minimum-weight correction after the flip. The
    connections appearing in only one correction form a path from $u$ to a boundary, together
    with closed loops and paths between boundaries. Each remaining loop or boundary-to-boundary
    path preserves all detector values and the logical class. Its net cost must be zero: a negative
    cost would improve $M$, while a positive cost would improve the new correction when
    reversed. We can therefore omit these loops and paths, leaving only the path starting at $u$.
\end{proof}

Caliper explores these reconnections using the precomputed distances. It reuses calculations
for the same partial update and stops exploring a choice once its lower bound exceeds the best
update found so far. Once every possible update has been checked or ruled out by the bounds,
the matching weight is exact. The calculation is repeated for both logical classes, then the
two weights are ordered as $w_0\leq w_1$.

If this calculation is too large or reaches its search limit, Caliper continues with bounds
on the weights. It improves these bounds where possible and uses MWPM as needed to identify
the smallest weights. A further decode may also be needed to construct the correction used
in the next search step. Some weights can remain approximate.

\textit{Simulation settings.---} In the main text, we use $b=6$ and $B=18$, select up to ten
neighbors by the non-increasing rule, and apply this rule only when expanding a prediction
within three flips of the initial prediction. The exact update calculation uses limits of
63 defects, 57 matched pairs, and successive limits of $2048$, $32768$, and $524288$ partial
updates. The margins for additional decoding and for correction updates are $2.5$ and $5$,
respectively, in natural-log units.

There is a particularly useful case in which no change to the gap is possible.

\begin{lemma}[Cancellation away from the critical path]
    \label{slem:cancel}
    Let $M_0$ and $M_1$ be minimum-weight corrections for syndrome $S$ in the two logical classes.
    Suppose a class-preserving modification $J$ is certified by Lemma~\ref{slem:cert} to be optimal
    for both class corrections. If
    \begin{equation}
        J\cap(M_0\oplus M_1)=\varnothing,
        \label{seq:no-critical-intersection}
    \end{equation}
    then
    \begin{equation}
        G(S\oplus T)=G(S).
    \end{equation}
\end{lemma}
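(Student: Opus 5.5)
The plan is to show that the same modification $J$ raises both class weights by exactly the same net cost. The gap is then unchanged.

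Write $\ell_0,\ell_1$ for the logical classes of $M_0,M_1$ (so $\|M_0\|=w_0(S)$, $\|M_1\|=w_1(S)$, and $\ell_1=\ell_0\oplus 1$). Since $J$ is class-preserving, it has detector boundary $T$ and even marked parity ($b=0$). By hypothesis, Lemma~\ref{slem:cert} certifies $J$ as optimal for both corrections. Combined with Lemma~\ref{slem:increment}, this gives
\begin{equation}
W_{\ell_0}(S\oplus T)=\|M_0\|+c_{M_0}(J),\qquad
W_{\ell_1}(S\oplus T)=\|M_1\|+c_{M_1}(J).
\end{equation}

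The key step is to show $c_{M_0}(J)=c_{M_1}(J)$ using Eq.~\eqref{seq:no-critical-intersection}. The condition $J\cap(M_0\oplus M_1)=\varnothing$ means every edge of $J$ lies either in both $M_0$ and $M_1$ or in neither. Hence $J\cap M_0=J\cap M_1$ and $J\setminus M_0=J\setminus M_1$. Substituting into the definition~\eqref{seq:net-cost}, the two net costs coincide termwise; call the common value $c$.

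Subtracting the two displayed identities gives
\begin{equation}
W_{\ell_1}(S\oplus T)-W_{\ell_0}(S\oplus T)=\|M_1\|-\|M_0\|=G(S)\geq 0.
\end{equation}
It remains to match this with $G(S\oplus T)=\max-\min$ of the two class weights. Because the difference above is nonnegative, class $\ell_0$ stays the minimum (or ties), so the ordering $w_0\le w_1$ is preserved. Therefore $G(S\oplus T)=G(S)$. The tie case $G(S)=0$ is harmless, since then both weights are equal before and after.

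The main obstacle is the interpretive one of exactly what "certified optimal for both class corrections" supplies. I read it as: Lemma~\ref{slem:cert} applies with $M=M_0$ and with $M=M_1$, each time with $J$ attaining $c_{\rm loc}$ within the $b=0$ modifications, so Eq.~\eqref{seq:increment} holds with equality at $J$ for both classes. If the certificate were only for one class, the argument would yield only an inequality. Once optimality for both is granted, the rest is bookkeeping of edge-set intersections. I would also note explicitly that $J$ need not be disjoint from $M_0\cap M_1$: shared edges cancel symmetrically in both net costs.
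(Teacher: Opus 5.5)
Your proposal is correct and follows the paper's proof. The disjointness condition gives $c_{M_0}(J)=c_{M_1}(J)$, and certification for both classes together with Lemma~\ref{slem:increment} shows that both class weights shift by this common amount, so their difference is unchanged; your explicit check that the ordering of the two class weights is preserved is a detail the paper leaves implicit.
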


\begin{proof}
    Equation~\eqref{seq:no-critical-intersection} implies that every edge of $J$ either belongs to
    both $M_0$ and $M_1$ or to neither. Therefore
    \begin{equation}
        c_{M_0}(J)=c_{M_1}(J).
    \end{equation}
    Since $J$ preserves the logical class and is optimal for both modified matching problems,
    Lemma~\ref{slem:increment} shows that $W_0$ and $W_1$ change by exactly the same amount. Their
    difference is therefore unchanged.
\end{proof}

The symmetric difference $M_0\oplus M_1$ is the part of the matching that distinguishes the two
logical classes. Lemma~\ref{slem:cancel} therefore formalizes the intuition used by Caliper:
hidden-syndrome modifications that do not interact with this region can often be shown to leave the
complementary gap exactly unchanged.

\subsection{Minimum-weight limit of the open-boundary score}

We now apply these facts to the open-boundary problem. At the time of the postselection decision,
only a subset of the detector outcomes is known. Let $\sigma_v$ denote this \emph{visible syndrome}.
The remaining detector outcomes, which would only be revealed by the later closing measurement, form
the \emph{hidden syndrome} $\sigma_h$.

\begin{definition}[Hidden-syndrome prediction]
    A \emph{hidden-syndrome prediction} $\hat\sigma_h$ is a possible assignment of outcomes to all
    hidden detectors. Together with the observed syndrome $\sigma_v$, it determines a prediction
    of the full syndrome,
    \begin{equation}
        \hat\sigma=\sigma_v\oplus\hat\sigma_h.
    \end{equation}
\end{definition}

Different predictions correspond to different possible syndromes that could be revealed if the
circuit were subsequently closed by the final measurement. For a prediction $\hat\sigma_h$, let
\begin{equation}
    w_0(\sigma_v\oplus\hat\sigma_h)
\end{equation}
and
\begin{equation}
    w_1(\sigma_v\oplus\hat\sigma_h)
\end{equation}
denote the smaller and larger minimum-weight logical-class corrections for the resulting full
syndrome.

Allowing the hidden detector values to vary freely gives
\begin{equation}
    w_0^*
    =
    \min_{\hat\sigma_h}
    w_0(\sigma_v\oplus\hat\sigma_h).
    \label{seq:omega-prime}
\end{equation}
This is simply the weight of the most likely error consistent with the information available at
decision time, and it is obtained by decoding with an open time boundary.

The corresponding minimum-weight limit of the open-boundary score is
\begin{equation}
    R_{\min}(\sigma_v)
    =
    \min_{\hat\sigma_h}
    w_1(\sigma_v\oplus\hat\sigma_h)
    -
    w_0^*.
    \label{seq:open-gap}
\end{equation}
The first minimization asks for the hidden syndrome under which an error in the less
likely logical class can occur with the smallest weight. Importantly, the hidden syndrome attaining this
minimum need not be the same hidden syndrome that attains $w_0^*$. Thus simply computing the two
unrestricted open-boundary matching weights does not determine Eq.~\eqref{seq:open-gap}; one must
search over possible hidden syndromes.

The open-boundary decode selects an initial prediction $\hat\sigma_{h_0}$ attaining $w_0^*$.
Caliper fixes this prediction and decodes the opposing class, then searches outward by changing
hidden detector values. The bounds above apply to each neighboring hidden syndrome; they are not lower
bounds on every hidden syndrome that could be reached by further changes.

\subsection{Guarantee for the Caliper search}

Let $\mathcal S$ be the set of predictions evaluated by the search, including its root, and define
\begin{equation}
    R_{\min}^{(\mathcal S)}(\sigma_v)
    = \min_{\hat\sigma_h\in\mathcal S}w_1(\sigma_v\oplus\hat\sigma_h)-w_0^*.
    \label{seq:truncated-gap}
\end{equation}

\begin{theorem}[Minimum over evaluated predictions]
    \label{sthm:main}
    If the class weights stored for each prediction in $\mathcal S$ are exact, then
    \begin{equation}
        R_{\min}(\sigma_v)\leq R_{\min}^{(\mathcal S)}(\sigma_v)
        \leq G(\sigma_v\oplus\hat\sigma_{h_0}).
        \label{seq:one-sided-depth}
    \end{equation}
    Adding evaluated predictions can only decrease $R_{\min}^{(\mathcal S)}$.
\end{theorem}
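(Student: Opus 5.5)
The plan is to read all three claims directly off the definitions in Eqs.~\eqref{seq:open-gap} and \eqref{seq:truncated-gap}, using only elementary properties of minima over nested index sets. The exactness hypothesis is what lets the stored values be treated as the true quantities $w_1(\sigma_v\oplus\hat\sigma_h)$ for every $\hat\sigma_h\in\mathcal S$. No result about matching beyond the definitions is needed. In particular, Lemmas~\ref{slem:increment}--\ref{slem:cancel} are only what makes the exactness hypothesis attainable in practice, and they do not enter the argument.

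The first step is the left inequality. The set $\mathcal S$ is a subset of all hidden-syndrome assignments. Hence
\begin{equation}
\min_{\hat\sigma_h}w_1(\sigma_v\oplus\hat\sigma_h)\leq\min_{\hat\sigma_h\in\mathcal S}w_1(\sigma_v\oplus\hat\sigma_h).
\end{equation}
Subtracting the common quantity $w_0^*$, which does not depend on $\mathcal S$, gives $R_{\min}\leq R_{\min}^{(\mathcal S)}$.

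The second step is the right inequality. The root $\hat\sigma_{h_0}$ lies in $\mathcal S$, so $R_{\min}^{(\mathcal S)}\leq w_1(\sigma_v\oplus\hat\sigma_{h_0})-w_0^*$. I then need $w_0^*=w_0(\sigma_v\oplus\hat\sigma_{h_0})$. This holds because $\hat\sigma_{h_0}$ is, by construction, the prediction induced by the open-boundary minimum-weight decode. That decode attains the minimum in Eq.~\eqref{seq:omega-prime}, and the restriction of its correction to the fixed syndrome $\sigma_v\oplus\hat\sigma_{h_0}$ is a correction of weight $w_0^*$. This gives $w_0(\sigma_v\oplus\hat\sigma_{h_0})\leq w_0^*$, while the reverse inequality is immediate from the definition of $w_0^*$ as a minimum. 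So the right-hand side equals $G(\sigma_v\oplus\hat\sigma_{h_0})$.

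The final step is monotonicity. If $\mathcal S\subseteq\mathcal S'$, the minimum over $\mathcal S'$ is at most the minimum over $\mathcal S$, again with $w_0^*$ fixed.

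The only step requiring care, and the one I expect to be the main obstacle, is the identification $w_0^*=w_0(\sigma_v\oplus\hat\sigma_{h_0})$. This requires that hidden detectors attached to the boundary at zero weight encode exactly the free choice of $\hat\sigma_h$. Concretely, every correction in the augmented graph must correspond to a correction of $\sigma_v\oplus\hat\sigma_h$ of the same weight for the $\hat\sigma_h$ read off from its endpoints in the hidden layer, and conversely. I would verify this correspondence explicitly, noting that zero-weight boundary edges contribute nothing to $\|M\|$ and cannot flip the observable.
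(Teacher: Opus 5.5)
Your proposal is correct and follows the paper's own proof. In both, the left inequality is the subset-minimum comparison with $w_0^*$ held fixed, the right inequality follows from the root $\hat\sigma_{h_0}$ lying in $\mathcal S$, and monotonicity is the nested-minimum argument. The only difference is that the paper takes $w_0(\sigma_v\oplus\hat\sigma_{h_0})=w_0^*$ as given by the construction of the root (the open-boundary decode ``selects an initial prediction attaining $w_0^*$''). You instead point out that $w_0(\sigma_v\oplus\hat\sigma_{h_0})\leq w_0^*$ is the direction the right inequality actually needs, and you justify it via the zero-weight hidden-to-boundary edges.
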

\begin{proof}
    The minimum over a subset cannot be smaller than the unrestricted minimum, and the root is
    included in $\mathcal S$. Enlarging the set cannot increase its minimum.
\end{proof}

We note that Caliper can find the smallest $w_1$ in $\mathcal S$ without pinpointing every
weight exactly. Only predictions whose lower bounds are below the smallest exact weight
found so far could improve that minimum. Caliper tightens these bounds or decodes the
predictions until none can do so.

This guarantee concerns the minimum opposing weight among evaluated predictions. Caliper limits
which predictions it expands. A prediction rejected for
expansion may have a descendant with lower opposing weight, so these local decisions do not certify
the global minimum or exhaustive coverage within a fixed number of flips. Even when the frontier
is exhausted, a restricted admission rule need not visit every hidden syndrome. Exhaustive enumeration would
recover $R_{\min}$, but would require exponentially many hidden syndromes. Nor does the monotonicity of
$R_{\min}^{(\mathcal S)}$ imply monotonicity of the summed score $R$ below.

\subsection{Relation to the logical failure probability}

Finally, we connect the minimum in Eq.~\eqref{seq:open-gap} to the most-likely-error approximation
of the open-boundary logical failure probability.

For a set $\mathcal S$ of hidden-syndrome predictions, define
\begin{equation}
    A_{\mathcal S}
    =
    \sum_{\hat\sigma_h\in\mathcal S}
    e^{-w_0(\sigma_v\oplus\hat\sigma_h)}
\end{equation}
and
\begin{equation}
    B_{\mathcal S}
    =
    \sum_{\hat\sigma_h\in\mathcal S}
    e^{-w_1(\sigma_v\oplus\hat\sigma_h)}.
\end{equation}
Within the most-likely-error approximation, $A_{\mathcal S}$ is the total weight assigned to the
more likely logical class of each predicted full syndrome, while $B_{\mathcal S}$ is the corresponding
weight of the less likely class. The logical failure probability restricted to these predictions is
therefore
\begin{equation}
    \widehat P_{\rm L}^{(\mathcal S)}(\sigma_v)
    =
    \frac{B_{\mathcal S}}
    {A_{\mathcal S}+B_{\mathcal S}}.
    \label{seq:restricted-failure}
\end{equation}

It is convenient to define
\begin{equation}
    R_{\mathcal S}(\sigma_v)
    =
    \log(A_{\mathcal S}+B_{\mathcal S})-\log B_{\mathcal S},
    \label{seq:aggregate-score}
\end{equation}
for which
\begin{equation}
    \widehat P_{\rm L}^{(\mathcal S)}(\sigma_v)
    =
    e^{-R_{\mathcal S}}.
    \label{seq:aggregate-probability}
\end{equation}

When $\mathcal S$ contains every possible hidden syndrome,
Eq.~\eqref{seq:restricted-failure} is the full most-likely-error approximation to the open-boundary
logical failure probability. Caliper instead sums over the predictions it has evaluated, using the smallest correction weights found during the search. Both sums omit positive terms, so the
resulting score can err in either direction. The alternative log-odds score
$\log(A_{\mathcal S}/B_{\mathcal S})$ orders shots identically to $R_{\mathcal S}$ and approaches it
when the estimated failure probability is small.

For each prediction, Caliper keeps the smallest correction weight found in each logical class.
We write these as $U_0\leq U_1$, sorting them in the same way as $w_0,w_1$. Since these corrections
need not have minimum weight, $U_i\geq w_i$. Caliper computes
\begin{equation}
    R=-\log\frac{\sum_{\hat\sigma_h\in\mathcal S}e^{-U_1(\hat\sigma_h)}}
    {\sum_{\hat\sigma_h\in\mathcal S}
    \left[e^{-U_0(\hat\sigma_h)}+e^{-U_1(\hat\sigma_h)}\right]}.
    \label{seq:implemented-score}
\end{equation}
In the exact limit, where both matching weights are known exactly for every prediction in
$\mathcal S$, $R_{\mathcal S}=R$.

The connection with the minimum-weight score follows from the usual log-sum-exp limit. If all edge
weights are scaled by a common factor $\beta$, then
\begin{equation}
    \lim_{\beta\rightarrow\infty}
    \frac{R_{\mathcal S}}{\beta}
    =
    \min_{\hat\sigma_h\in\mathcal S}
    w_1(\sigma_v\oplus\hat\sigma_h)
    -
    \min_{\hat\sigma_h\in\mathcal S}
    w_0(\sigma_v\oplus\hat\sigma_h).
    \label{seq:zero-temp}
\end{equation}
For the unrestricted set of hidden syndromes, the right-hand side is exactly $R_{\min}(\sigma_v)$.

Equations~\eqref{seq:open-gap} and \eqref{seq:zero-temp} distinguish the two quantities.
The open-boundary decoder immediately determines the minimum possible $w_0$, while the search
finds predictions with low $w_1$. The minimum-weight quantity $R_{\min}^{(\mathcal S)}$ obeys
Theorem~\ref{sthm:main}; the finite-sum score $R$ estimates the probability of eventual failure.
Neither is generally ordered relative to the closed-boundary gap of the subsequently observed
hidden syndrome.

\section{Observables and temporal boundaries}
We discuss here the necessity of a measurement that closes a temporal boundary for a useful
definition of the complementary gap. In particular, we demonstrate that logical observables defined
mid-circuit are inequivalent to observables defined on temporal boundaries, and they do not
implement useful realizations of the complementary gap.

To do so, we consider the spacetime code associated with the circuit-level DEM. In complete
generality, it can be formally described by a subsystem code \cite{pesah_fault-tolerant_2025}; we
omit the explicit formulation here since it is not necessary for our purposes. The relevant
consideration is the role of detector correlations as circuit stabilizers. We should first note that
there are two distinct flavors of circuit stabilizers, which should be distinguished. The first are
the space-time stabilizers that correspond to error mechanisms that are undetectable but also do not
flip any logical observable. These can be used to deform any logical error into all of its
degenerate representatives in the circuit. The other set of stabilizers corresponds to deterministic
correlations in the circuit that do not reveal any logical information, which are precisely
the detectors of the circuit. The logical correlations, on the other hand, are the membranes
defined in spacetime that capture the propagation of logical information through the circuit. In the
case of our cultivation circuits, these are $\bar Y$ membranes that extend backwards from the
terminal measurement.

Typically, detectors are evaluated from the perspective of parity checks in the circuit, but they
also function as stabilizers. They can be used to deform certain logical correlations to other,
logically equivalent correlations \cite{derks_designing_2024}. This equates to taking any representative membrane for $\bar Y$
and dressing it with the support of any set of detectors (Fig.~\ref{fig:membrane-equivalence}). Conversely, if two such membranes are
indeed logically equivalent correlations, their product must therefore be a detector. The
implication of this argument is that there cannot be any undetectable logical error that
flips---intersects an odd number of times---one observable without flipping the other. If there
were, taking the product of them would imply the resulting detector is flipped by the error, which
contradicts that it is undetectable.

\begin{figure}[tb!]
    \centering
    \includegraphics[width=\linewidth]{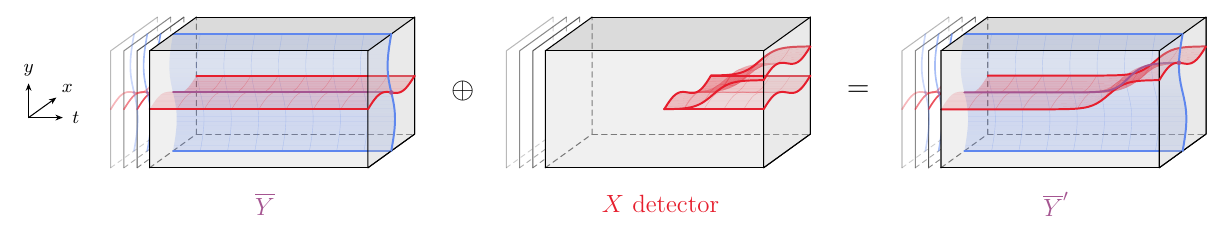}
    \caption{\textbf{Equivalent logical membranes.} Adding an $X$ detector deforms $\bar Y$ into
    an equivalent membrane $\bar Y'$, retaining support on the terminal measurement boundary.}
    \label{fig:membrane-equivalence}
\end{figure}

From this, one can deduce that a logical observable defined instantaneously in the middle of a noisy
circuit cannot be equivalent to a logical operator defined on the temporal boundary of the final
measurement. The reason for this is that there are undetectable logical errors that can be applied
between these two times, which seem to flip one logical correlation but not the other. Thus, the
$\bar Y$ observable defined in the middle of a noisy circuit backpropagates to a membrane in
spacetime that is \emph{inequivalent} to the observable defined on the terminal boundary. Another
way to see this is to explicitly consider deformations of the $\bar Y$ membrane backpropagating from
the terminal boundary. One can verify that there is no way to clean the membrane off the terminal
boundary, since there is no product of detectors whose support on the final measurement can
evaluate to a logical operator.

This is precisely the sense in which the complementary gap is not well-defined when naively applied
to observables defined in the middle of the circuit. In particular, an observable defined at time
$t$ in the middle of the circuit corresponds to a spacetime membrane extending backwards in time
from $t$. This membrane can be flipped by the
first type of stabilizers we described, as shown in Fig.~\ref{fig:midcircuit-membrane}. Of course, those stabilizers do not flip
the membranes of the correctly defined observables. As many of these stabilizers have weight $O(1)$,
if we tried to run the complementary gap on observables defined in the middle of the circuit, we
would find that every shot has a small, constant-size gap, and would not achieve the desired
suppression.

\begin{figure}[tb!]
    \centering
    \includegraphics[width=0.6\linewidth]{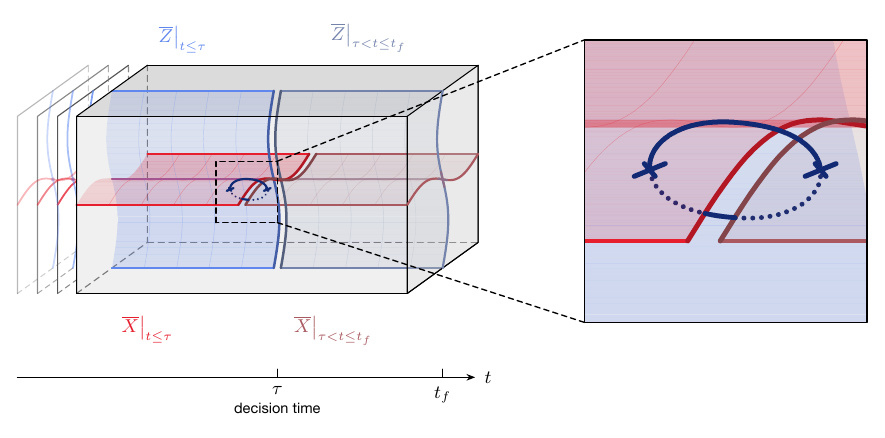}
    \caption{\textbf{Logical membranes at an open temporal boundary.} The local fault loop
    (inset) intersects the $\bar X$ membrane once before the decision time $\tau$ and once after,
    flipping the truncated observable while preserving the full observable at $t_f$.}
    \label{fig:midcircuit-membrane}
\end{figure}

Note that if we have a noiseless syndrome extraction round, it becomes possible to clean $\bar Y$
off the terminal boundary and to this noiseless error correction round. This is because, with the
assumption of no further errors, there are many additional, linearly independent correlations that
are assumed to be deterministic. We emphasize that these emerge artificially from the assumption
that there is no noise in the system, rather than a result of error correction. It would be atypical
to call these additional correlations detectors, but they function as circuit-level stabilizers just
like the detectors. With these added correlations, it becomes immediate to clean $\bar Y$ to the
preceding perfect syndrome extraction round, since without any subsequent noise, the two observables must agree. 

\FloatBarrier
\section{Optimality of Full Error Detection}
In this section, we discuss the optimality of full-error detection on the visible syndrome in
mid-circuit postselection. This conclusion is often implicitly assumed by carrying over the
intuition from closed-boundary postselection. In the closed-boundary setting, one decides whether to
retain the classical statistics from an observed shot based on its complete syndrome. The strategy
that retains only shots with trivial syndrome, $\sigma=0$, is optimal.

One way to understand this result in the closed-boundary setting is through the
statistical-mechanical model associated with the corresponding space-time code. For a fixed observed
syndrome $\sigma$, the compatible fault configurations decompose into logical equivalence classes.
If $Z_\ell(\sigma)$ denotes the partition function obtained by summing the probabilities of all
fault configurations with syndrome $\sigma$ and logical class $\ell$, then
\begin{equation}
    \Pr(\ell\mid \sigma)
    =
    \frac{Z_\ell(\sigma)}
    {\sum_{\ell'} Z_{\ell'}(\sigma)}.
\end{equation}
Thus the posterior logical error probability is determined by the relative free energies
\begin{equation}
    F_\ell(\sigma)=-\beta^{-1}\log Z_\ell(\sigma)
\end{equation}
of the different logical sectors. In particular, after maximum-likelihood decoding,
\begin{equation}
    p_{\mathrm L}(\sigma)
    =
    1-
    \frac{\max_\ell Z_\ell(\sigma)}
    {\sum_{\ell'}Z_{\ell'}(\sigma)}.
\end{equation}
At $p/p_{th} \ll 1$, i.e., low temperatures, these free-energy differences are dominated by the
corresponding minimum energies, recovering the familiar picture in which logical reliability is
controlled by the energy gap between the most likely logical sector and its competitors.

The open-boundary problem is subtler. In a cultivation-like circuit, the complete space-time
syndrome is not available: a subset of detector outcomes is absent because the circuit terminates at
an open temporal boundary. Consequently, postselection can depend only on the visible syndrome
$\sigma_v$, and the posterior logical error probability must average over the unobserved portion of
the syndrome. The usual closed-boundary argument therefore does not by itself imply that
$\sigma_v=0$ is optimal -- it is not necessarily the case that post-selecting purely on $\sigma_v =
0$ shots minimizes the logical error rate.

Below we argue that at sufficiently low noise, full error-detection on the visible syndrome is
nonetheless approximately optimal in a sense that is made precise below.

\begin{theorem}[Low-noise optimality of zero visible syndrome]

    Let \(v\) and \(h\) denote the visible and hidden parts of the syndrome. For \(\ell\in\{0,1\}\),
    let \(d_\ell(v,h)\) be the minimum number of faults producing syndrome \((v,h)\) and logical
    class \(\ell\). Define
    \begin{equation}
        a(v)=\min_{h,\ell} d_\ell(v,h),
        \qquad
        b(v)=\min_h \max_{\ell} d_\ell(v,h),
    \end{equation}
    and \(g(v)=b(v)-a(v)\).

    If the complete syndrome \((v,h)\) is decoded by maximum likelihood, then the conditional
    probability of a logical error given $v$ is
    \begin{equation}
        \Pr(\mathrm{Fail}\mid v)
        =
        \Theta\!\left(p^{g(v)}\right),
    \end{equation}
    and
    \begin{equation}
        g(v)\leq g(0)
    \end{equation}
    for every attainable visible syndrome \(v\).

\end{theorem}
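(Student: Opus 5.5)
The plan is to treat the two claims separately. The asymptotic claim is a low-noise counting argument over a finite detector error model. The inequality \(g(v)\le g(0)\) is a purely combinatorial statement that follows from composing fault configurations, in the same spirit as Lemma~\ref{slem:increment}.

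Throughout I will assume that every elementary mechanism has \(p_\varepsilon=\Theta(p)\) with \(p\to0\), and that the DEM is fixed. Then any configuration \(E\) has \(P(E)=\Theta(p^{|E|})\), and the number of configurations is a \(p\)-independent constant. I also adopt the convention \(d_\ell=\infty\) when class \(\ell\) is unattainable, so that \(g(v)=\infty\) means the failure probability is exactly zero.

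\emph{Asymptotics.} Write \(Z_\ell(v,h)=\sum_{E:\,\partial E=(v,h),\,\ell(E)=\ell}P(E)\). Each such sum is \(\Theta(p^{d_\ell(v,h)})\): its minimum-weight terms dominate, and finitely many terms cannot change the exponent.
\begin{itemize}
\item \textbf{Probability of \(v\).} Summing over \(h\) and \(\ell\), again finitely many terms, gives \(\Pr(v)=\Theta(p^{a(v)})\).
\item \textbf{Failure at a complete syndrome.} Maximum-likelihood decoding of \((v,h)\) fails with probability \(\min_\ell Z_\ell(v,h)\). For small \(p\) this is \(\Theta(p^{\max_\ell d_\ell(v,h)})\). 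When the two exponents tie, the minimum is still of that order, so ties cause no trouble.
\item \textbf{Failure given \(v\).} Summing over \(h\) gives \(\Pr(\mathrm{Fail},v)=\Theta(p^{b(v)})\).
\end{itemize}
Dividing the last estimate by \(\Pr(v)\) yields \(\Pr(\mathrm{Fail}\mid v)=\Theta(p^{b(v)-a(v)})\), which is the first claim.

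The step needing the most care is making the hidden constants uniform. They depend only on the DEM and on the ratios \(p_\varepsilon/p\), so they stay bounded as \(p\to0\). One must also check that the \(\min_\ell Z_\ell\) estimate holds even when the two classes' leading coefficients differ; it does, because only exponents are compared.

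\emph{Inequality.} For \(v=0\) the empty configuration gives \(a(0)=0\), so \(g(0)=b(0)\). It therefore suffices to show \(b(v)\le a(v)+b(0)\).
\begin{enumerate}
\item Choose a configuration \(E\) of size \(a(v)\) with syndrome \((v,h_1)\) and some class \(\ell_E\).
\item Choose \(h^*\) attaining \(b(0)\), together with configurations \(F_0,F_1\) of classes \(0\) and \(1\), each with syndrome \((0,h^*)\) and at most \(b(0)\) faults.
\item Form \(E\oplus F_\ell\) for \(\ell\in\{0,1\}\). Each has syndrome \((v,h_1\oplus h^*)\), logical class \(\ell_E\oplus\ell\), and at most \(a(v)+b(0)\) faults. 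As \(\ell\) ranges over both values, both logical classes are realized.
\item Hence \(\max_\ell d_\ell(v,h_1\oplus h^*)\le a(v)+b(0)\), and taking the minimum over hidden syndromes gives \(b(v)\le a(v)+b(0)\).
\end{enumerate}
This yields \(g(v)\le g(0)\) for every attainable \(v\).

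The only subtlety is that symmetric differences of mechanism sets can cancel faults. Cancellation only lowers the count, so the bound survives; the syndrome and the logical parity are additive over \(\oplus\). If \(b(0)=\infty\), the inequality is trivial.
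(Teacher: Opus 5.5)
Your proposal is correct and follows the paper's proof essentially step for step. You use the same estimates $\Pr(v)=\Theta(p^{a(v)})$ and $\Pr(\mathrm{Fail},v)=\Theta(p^{b(v)})$, obtained by taking $\min_\ell$ of the class probabilities at each complete syndrome. You also use the same composition of a weight-$a(v)$ configuration with the two $b(0)$-attaining configurations to get $b(v)\le a(v)+b(0)$, together with $a(0)=0$. Your explicit conventions make precise what the paper leaves implicit: $p_\varepsilon=\Theta(p)$, $d_\ell=\infty$ for unattainable classes, and the remark that cancellation under $\oplus$ only lowers the fault count.
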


\begin{proof}
    Let \(P_\ell(v,h)\) be the probability that the syndrome is \((v,h)\) and the logical class is
    \(\ell \in \{0,1\}\). The contributions to \(P_\ell(v,h)\) are dominated by minimum weight
    faults, so that
    \begin{equation}
        P_\ell(v,h)
        =
        \Theta\!\left(p^{d_\ell(v,h)}\right).
    \end{equation}

    For a fixed full syndrome \((v,h)\), maximum-likelihood decoding fails with joint
    probability
    \begin{equation}
        \min\{P_0(v,h),P_1(v,h)\}
        =
        \Theta\!\left(
        p^{\max\{d_0(v,h),d_1(v,h)\}}
        \right).
    \end{equation}
    Summing over \(h\) and approximating the sum by the dominant (minimum weight) contribution
    therefore gives
    \begin{equation}
        \Pr(v,\mathrm{Fail})
        =
        \Theta\!\left(p^{b(v)}\right).
    \end{equation}
    where entropic factors are absorbed into the big-$\Theta$ notation. Similarly,
    \begin{equation}
        \Pr(v)
        =
        \Theta\!\left(p^{a(v)}\right).
    \end{equation}
    Hence
    \begin{equation}
        \Pr(\mathrm{Fail}\mid v)
        =
        \Theta\!\left(p^{b(v)-a(v)}\right).
    \end{equation}

    It remains to compare \(g(v)\) with \(g(0)\). Let \(E\) be a fault configuration of weight
    \(a(v)\) producing visible syndrome \(v\). Choose two fault configurations \(A_0\) and \(A_1\)
    that attain \(b(0)\). Thus \(A_0\) and \(A_1\) have zero visible syndrome, the same hidden
    syndrome, opposite logical classes, and at most \(b(0)\) faults each.

    The configurations
    \begin{equation}
        E\mathbin{\oplus}A_0,
        \qquad
        E\mathbin{\oplus}A_1,
    \end{equation}
    where $\oplus$ denotes symmetric difference, have visible syndrome \(v\), the same hidden
    syndrome, and opposite logical classes. Each has at most \(a(v)+b(0)\) faults. Therefore
    \begin{equation}
        b(v)\leq a(v)+b(0).
    \end{equation}
    Since the no-fault configuration produces the zero syndrome, \(a(0)=0\), and hence
    \(g(0)=b(0)\). It follows that
    \begin{equation}
        g(v)=b(v)-a(v)\leq b(0)=g(0).
    \end{equation}
\end{proof}

Consequently, if
\begin{equation}
    g(v)<g(0)
    \qquad
    \text{for every }v\neq 0,
\end{equation}
then, for all sufficiently small \(p\),
\begin{equation}
    \Pr(\mathrm{Fail}\mid v)
    >
    \Pr(\mathrm{Fail}\mid 0)
    \qquad
    \text{for every }v\neq0.
\end{equation}
In this case, keeping only shots with \(v=0\) is the optimal postselection rule at its corresponding
acceptance probability. However, it is possible that $g(v)=g(0)$, in which case it remains possible
that full error-detection on the visible syndrome leads to a larger conditional probability of a
logical error by a constant factor than a postselection policy that retains some shots with a
non-trivial visible syndrome, even as $p\rightarrow 0$. Moreover, at larger $p$, for example in
near-threshold regimes, it is possible that other postselection strategies may be optimal.

\section{Additional effects from \texorpdfstring{$\ket{\bar T}$}{T} cultivation}
\label{app:t-vs-y}
In this work we simulated the Clifford circuit for $\ket{\bar Y}$ cultivation with open-boundary escape
stage postselection as a proxy for $\ket{\bar T}$ state cultivation, which allows for a direct comparison
with the closed-boundary postselection $\ket{\bar Y}$ state cultivation results of Refs.
\cite{sahay2026fold, gidney_magic_2024}. Recent works \cite{li2025soft, chase_clifft_2026,
takada2026exactefficientsimulationnoisy, hartweg_comparing_2026, wan_exact_2026,
chan_diagnosing_2026} have developed numerical methods to simulate the exact non-Clifford circuits
of $\ket{\bar T}$ cultivation. Simulating only the injection and cultivation stages (without the escape
stage), Refs. \cite{li2025soft, chase_clifft_2026, wan_exact_2026} identified that the logical error
rate of $f=5$ color code cultivation is roughly $15 \times$ larger than that obtained from
$\ket{\bar Y}$ cultivation proxies, as opposed to the $2 \times$ originally expected
\cite{gidney_magic_2024}. These discrepancies can be understood as originating from hook errors in
the non-Clifford circuit. Ref. \cite{chan_diagnosing_2026} recently introduced a flagging method to
recover the pre-escape circuit-distance with a modest increase in the expected number of retries.

However, Ref. \cite{sahay2026fold} studied the discrepancy between $\ket{\bar Y}$ state and $\ket{\bar T}$
state cultivation including the escape stage (with closed-boundary postselection) for the $f=3$
protocol. The authors conclude that the escape stage dominates the error-rate of the entire protocol
for both the Clifford and non-Clifford circuit, leading to negligible differences in the final
logical error-rate. A similar conclusion was made for the color-surface code cultivation protocol
including escape (with closed-boundary postselection) in recent work in Ref.
\cite{chase_clifft_2026}.

It remains to be understood whether further discrepancies arise between $\ket{\bar Y}$ and $\ket{\bar T}$
cultivation once the effects of open-boundary escape-stage postselection are included. It is
possible that an exact simulation of $\ket{\bar T}$ cultivation with our open-boundary postselection
method will reveal additional spacetime costs to recovering the performance of magic-state
cultivation, and is thus an important direction for future work.

\section{Adaptive Escape and Postselection Protocols}
\label{app:adaptive-escape}

The escape-stage optimizations considered above use a fixed code distance and a fixed number of syndrome-extraction rounds for every attempt. However, the information needed to make a reliable postselection decision can vary substantially between syndrome histories. Some attempts may already be accepted or rejected after a short escape stage, whereas others remain ambiguous and could benefit from further measurements. This motivates an adaptive escape protocol, in which the observed syndrome determines both the postselection decision and the resources allocated before that decision. The objective is to reduce the expected space-time volume per accepted magic state while maintaining a prescribed logical error rate.

\subsection{Syndrome-dependent termination}

Consider a sequence of decision points during escape. Let $h_k$ denote the information available at decision point $k$, including the observed syndrome, the code geometry, and previous adaptive operations. Define the adaptive counterpart of the open-boundary conditional logical error probability $\tilde P_L(\sigma_v)$ as
\begin{equation}
    q_k(h_k)
    =
    \Pr\!\left(
        \text{logical failure if accepted at }k
        \,\middle|\, h_k
    \right).
\end{equation}
Similar to the fixed case, this is inferred using only information available at decision time.

At each decision point, the protocol can accept, reject, or continue collecting syndrome information. A simple policy accepts when the estimated conditional logical error probability falls below a prescribed threshold. Otherwise, it continues if additional measurements are expected to justify their cost, and rejects if further measurements are unlikely to help or a maximum resource budget has been reached.

The distinction between continuation and rejection is essential. An inconclusive syndrome may indicate that the measurement record is too short to distinguish competing fault configurations, rather than that the attempt should be discarded. Additional measurements can resolve this ambiguity and allow some attempts to be accepted. Conversely, attempts that can already be accepted or rejected need not incur the cost of the longest escape stage.

\subsection{Temporal and spatial adaptation}

The simplest implementation holds the escape-code distance fixed and adapts the number of syndrome-extraction rounds. After an initial buffer of $r_{\min}$ rounds, the protocol evaluates an open-boundary postselection metric. Attempts that remain ambiguous undergo another round, or a small batch of rounds, before reevaluation. This continues until acceptance, rejection, or a maximum duration $r_{\max}$.

Additional rounds move the open temporal boundary away from earlier portions of the escape circuit and can make previously ambiguous fault configurations more distinguishable. However, longer evolution does not guarantee a lower conditional logical error probability: additional rounds introduce faults, and an open temporal boundary remains at every decision point. The benefit of further information must therefore be weighed against its resource cost and the additional logical error accumulated during continuation.

The same principle extends to the spatial direction. Instead of growing every attempt immediately to a common final distance, the protocol could use intermediate code sizes, similar to the original cultivation protocol but now applied to the escape stage. At each size, the syndrome would determine whether to accept, reject, perform additional rounds, or grow the code further. It is worth noting, however, that for the spatial case, the smaller patch sizes could lead to logical error contributions that are not reversible upon further code growth. Our fixed-geometry results motivate considering both controls, since increasing the escape duration alone need not recover the performance obtained by increasing both duration and distance.

\subsection{Resource objective}

Let $\pi$ denote an adaptive policy, $A$ the acceptance event, and $\tau$ the terminal decision point. Let $V_\pi$ be the volume consumed by one attempt, including injection, cultivation, and escape:
\begin{equation}
    V_\pi = \sum_{t=1}^{T_\pi} n_t\,\Delta t,
\end{equation}
where $T_\pi$ is the number of circuit layers, $n_t$ is the number of occupied physical qubits in layer $t$, and $\Delta t$ is the duration of each layer. Rejected attempts contribute the volume consumed before rejection. For independently repeated attempts with nonzero acceptance probability, the expected volume per accepted state is
\begin{equation}
    \overline V_\pi
    =
    \frac{\mathbb E_\pi[V_\pi]}{\Pr_\pi(A)}.
\end{equation}
The accepted logical error rate is
\begin{equation}
    \epsilon_\pi
    =
    \frac{
        \mathbb E_\pi\!\left[
            \mathbf 1_A q_\tau(h_\tau)
        \right]
    }{
        \Pr_\pi(A)
    }.
\end{equation}
The optimization objective is therefore
\begin{equation}
    \min_\pi \overline V_\pi
    \qquad \text{subject to} \qquad
    \epsilon_\pi \leq \epsilon_{\mathrm{target}}.
\end{equation}

Adaptation can reduce the volume spent on attempts that are already reliable, increase acceptance by resolving ambiguous histories, and avoid further expenditure on attempts that are unlikely to become acceptable.

With exact conditional probabilities, requiring
$q_\tau(h_\tau)\leq\epsilon_{\mathrm{target}}$ for every accepted history is sufficient to satisfy the error constraint. This is stronger than necessary, since the constraint applies to the average error probability of the accepted ensemble. More general policies could use different thresholds at different decision points while maintaining the same ensemble-level target.

Since adaptive postselection requires making a decision at an open boundary, we can use Caliper to supply a postselection score at each decision point, and decide on acceptance, rejection or continuation. Calibration is particularly important for adaptive decisions, as the raw Caliper estimate need not equal the conditional logical error probability, and attempts reaching later checkpoints have already been selected by earlier decisions. In addition, adaptive postselection may place stringent requirements on the classical processing time, and if classical processing requires additional waiting rounds, their volume and faults must be included in the resource estimate. We leave the detailed optimization of such an adaptive protocol to future work.

\end{document}